\theoremstyle{plain}
\newtheorem{theorem}{Theorem}[section]
\newtheorem{corollary}{Corollary}
\newtheorem{definition}{Definition}
\newtheorem{lemma}[theorem]{Lemma}
\newtheorem{proposition}{Proposition}
\newtheorem{remark}{Remark}
\numberwithin{equation}{section}
\begin{document}
\title[ ]{ Large matrices,Planar graphs and Virasoro Conjecture}
\author[]{Da Xu,   Palle Jorgensen}
\address{Department of Mathematics\\
The University of Iowa\\
Iowa City, IA 52242} \email{dxu@math.uiowa.edu,jorgen@math.uiowa.edu
}
\keywords{random matrix theory, hierarchy, integral system, conformal field theory, Virasoro conjecture, planar graph}
\maketitle
\footnote{The second named author was supported by a grant from the National Science
Foundation.}

\begin{abstract}
 In this paper, we first review one of  difficult parts of the proof of Witten's conjecture by Kontsevich that had not been emphasized before. In the derivation of the KdV equations, we review the boson-fermion correspondence method \cite{K} to show that the trajectory of $\rm GL_\infty$ action on 1 as an element of the ring $\mathbb{C}[x_1,x_2,\cdots]$ yields the solutions of KP hierarchies. Then we  consider the corresponding theory in which the target manifold is a K\"{a}hler manifold. We conjecture that this  nonlinear sigma model is equivalent to a "planar graph" theory. Assuming the conjecture holds, we are able to get the Virasoro constraints in the Virasoro conjecture.  
\end{abstract}

 \maketitle
\section{introduction}
In this paper we consider a problem from string theory whose solution involves moduli spaces from algebraic geometry, unitary representations of infinite-dimensional Lie groups arising as central extensions.
  We shall adopt standard terminology from algebraic geometry, for example, "a moduli space of a Riemann surface of genus $g$ with $n$ points punched" is a geometric space which is the collection of  the complex structure with the $n$ points on the Riemann surface \cite{Mirror}. Such spaces arise generally (as in our present analysis) as solutions to classification problems: For example, if one can show that a collection of smooth algebraic curves of a fixed genus can be given the structure of a geometric space, this then leads to a  new parametrization; so an object viewed  as an entirely separate space. This in turn is accomplished by introducing coordinates on the resulting space. In this context, the term "modulus" is used synonymously with "parameter"; moduli spaces are understood as spaces of parameters rather than as spaces of objects.

Examples: The real projective space $\mathbb{R}P^n$ is a moduli space. It is the space of lines in $\mathbb{R}^{n+1}$ which pass through the origin. Similarly, complex projective space is the space of all complex lines in $\mathbb{C}^{n+1}$. More generally, the Grassmannian $Gk(V)$ of a vector space $V$ over a field $F$ is the moduli space of all k-dimensional linear subspaces of V. The Hilbert scheme $\operatorname{Hilb}(X)$ is a moduli scheme. Every closed point of $\operatorname{Hilb}(X)$ corresponds to a closed subscheme of a fixed scheme $X$, and every closed subscheme is represented by such a point.

Moduli spaces are defined more generally in terms of the moduli functors, and spaces representing them, as is the case for the classical approaches and problems using Teichm\"{u}ller spaces in complex analytical geometry. Our presentation here will take place within this generally framework of moduli spaces.

On the physics side, due to intensive study of string theory, in particular, nonlinear sigma model, the intersection number of moduli spaces play a fundamental role, see \cite{Mirror}, where the readers can find other references.
\subsection{Notation}
Let us recall some definitions in this subsection.\\
$\bar{\mathcal{M}}_{g,n}$ is the compactification of the parameter space of the complex structures  of a Riemann surface with genus $g$ and $n$ points punched out. \\
$\mathcal{L}_i$ denotes the line bundle on $\bar{\mathcal{M}}_{g,n}$ ,which fiber at point $\{C;x_1,x_2,\cdots,x_n \}$ is the cotangent space $T^*_{x_i}(C)$.\\
 $c_1(\mathcal{L}_i)$ denotes the first chern class of the line bundle $\mathcal{L}_i$. \\
When $n=0$, $\bar{\mathcal{M}}_{g,0}$ is the quotient space of the space of metrics $\mathcal{G}_g$ by the group action of $(\operatorname{Diff}\times \operatorname{Weyl}$, where $\operatorname{Diff}$ and $\operatorname{Weyl}$ denotes diffeomorphism and Weyl transformations \cite{P}.\\
The compactification of $\mathcal{M}_{g,n}$ is denoted by $\bar{\mathcal{M}}_{g,n}$ \cite{D}. \\
The following definitions are often used in this paper:
\\

\emph{Amplitude}:
$$\langle \tau_{d_1} \tau_{d_2}
\cdots \tau_{d_n}\rangle=\int_{\bar{\mathcal{M}}_{g,n}} \Pi_{i=1}^n (c_i(\mathcal{L}_i))^{d_i}.
$$
Another notation that will be used is
$$\langle \tau_0^{r_0}\tau_1^{r_1}\tau_2^{r_2}\cdots \rangle:=\langle \tau_{d_1} \tau_{d_2}
\cdots \tau_{d_n}\rangle,
$$
 where $r_0$ of the $d_i$s are equal to 0;$r_1$ of them are equal to 1,etc.\\
\emph{Free energy of genus $g$}:\begin{align}
F_g(t_0,t_1,\cdots)=\langle \exp(\sum_{i=0}^\infty t_i \tau_i)\rangle=\sum_{(k)}\langle \tau^{k_0}_0 \tau^{k_1}_1
\cdots \rangle \prod_{i=0}^\infty \frac{t^{k_i}_i}{k_i !}, \label{1}
\end{align}\\
\emph{Partition function}:
\begin{align}
Z(t_*)=\exp \sum_{g=0}^\infty F_g(t_*),
\end{align}
 where  the free energy $F_g(t_*)$.

\subsection{Witten's 1990 conjecture }

Witten's  conjecture (1990) asserts that \emph{the partition function $Z(t_*)=\exp \sum_{g=0}^\infty F_g(t_*)$ \cite{W1} is the $\tau$-function of the KdV hierarchy}. A $\tau$-function for the KdV hierarchy means from the $\tau$-function, we can construct the solution of the KdV equation:
\begin{align}
\frac{\partial U}{\partial t_1}=U\frac{\partial U}{\partial t_0}+\frac{1}{12}\frac{\partial^3 U}{\partial t_0^3},\label{KdV}
\end{align}
where $U=\frac{\partial^2 \ln Z(t_*)}{\partial t_0^2}$.

The Korteweg-de Vires equations(KdV) have their origins in the story of water waves in a shallow channel, and have numerous of applications different from their origin.
If $x$ is the space variable, the standard form of KdV is
\begin{align}
u_t+u_{xxx}+6uu_x=0.  \label{KdV1}
\end{align}
This equation is known to have soliton solutions, i.e., $u(x,t)=f(x-ct)$ where
$f$ satisfies
\begin{align}
-cf^{'}+f^{'''}+6ff^{'}=0. \label{KdV2}
\end{align}
so,
\begin{align}
f=\frac{c}{2\cosh^2(\frac{\sqrt{c}}{2}(x-a))}\nonumber,
\end{align}
where $a$ and $c$ are constants. Since this initial discovery, other  after invariants have been found. The first such sequence of invariants
came in 1960s from P.Lax's commutator method: Let $u=u(x,t)$ be a function in two variables, and consider
\begin{align}
(L_u(t)f)(x)=-f^{''}+u(x,t)f(x),\nonumber
\end{align}
i.e.,
the Schr\"{o}dinger operator
\begin{align}
L_u(t)=-(\frac{d}{dx})^2+u(x,t) \label{KdV3}
\end{align}
acting on function $f(x)$.

Lax (1968) found that if
\begin{align}
A=4(\frac{\partial}{\partial x})^3-3 (u\frac{\partial}{\partial x}+\frac{\partial}{\partial x}u). \nonumber
\end{align}
Then
\begin{align}
\frac{\partial}{\partial t}L_u(t)=[L_u,A]=L_u A-AL_u,\label{KdV4}
\end{align}
and this accounts for one infinite family of "integrals" or invariants. Specifically with $u=u(x,t)$, consider
$L=L_u$ as in (\ref{KdV3}), we then obtain that eigenvalues of (\ref{KdV3}) yield invariants, and the relevant $u$ in (\ref{KdV4}) is from the solutions to the KdV equation (\ref{KdV1}).
\subsection{KdV hierarchy}

Kontsevich's proof mainly consists of three steps:\\
1.Based on a theorem of Strebel, the one to one correspondence of the space $\bar{\mathcal{M}}_i \times \mathbb{R}_{+}^n$  with  the "fat graphs". And then the main identity is proved:
\begin{align}
\sum_{d_*:\sum d_i=d}\langle \tau_{d_1}\cdots \rangle \prod_{i=1}^n \frac{{(2d_i-1)!!}}{\lambda_i^{2d_i+1}}
=\sum_{\Gamma\in G_{g,n}} \frac{2^{-v(\Gamma)}}{|Aut \Gamma|} \prod_{e\in E}\frac{2}{\tilde{\lambda}(e)}. \label{main}
\end{align}\\
2.By the main identity and the Feynman diagram techniques, the partition function which is the exponential of the free energy $F(t_0(\Lambda),t_1(\Lambda),\cdots)$, where
$$
t_i(\Lambda)=-(2i-1)!! tr \ (\Lambda^{-(2i+1)})
$$
is the asymptotic expansion of the random matrix integral
\begin{align}
I_N(\Lambda)=\int \exp(\frac{\sqrt{-1}}{6} tr \ (M^3) d\mu_{(\Lambda)}(M),\label{2}
\end{align}
where the measure $d\mu_{(\Lambda)}(M)=\frac{- \frac{tr \ (M^2 \Lambda)}{2}dM }{\int - \frac{tr \ (M^2 \Lambda)}{2}dM }$.\\

3. By expansion of matrix Airy function and some properties of $\tau$-functions of KdV hierarchy,the author shows the
integral (\ref{2}) is the $\tau$-function is the asymptotic expansion of (\ref{2}) as the rank of $\Lambda$ goes to infinity.
\subsection{the first step}

A \emph{quadratic differential} $\phi$ on a Riemann surface $C$ of finite type is a holomorphic section of the line bundle $(T^*)^{\otimes 2}$. A nonzero quadratic differential defines a metric in a local coordinates $z$:
\begin{align}
|\phi(z)|^2 |dz|^2, \ where \ \phi=\phi(z)dz^2.
\end{align}
A \emph{horizontal trajectory}  of a quadratic differential is a curve along which $\phi(z)dz^2$ is real and positive. JS quadratic differentials are those for which the union of nonclosed trajectories has measure zero.
Strebel proved in 1960s the following
\begin{theorem}
For any connected Riemann surface $C$ and $n$ distinct points $x_1,\cdots,x_n\in C, n>0,n>\chi(C)$ and $n$ positive real numbers $p_1,p_2,\cdots,p_n$  there exists a unique $JS$ quadratic differential on $C/ \{x_1,\cdots, x_n \}$ whose maximal ring domains are $n$ punctured disks $D_i$  surrounding  points $x_i$ with circumference $p_i$.
\end{theorem}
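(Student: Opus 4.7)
The plan is to prove this via the method of extremal metrics, which is the standard route for Jenkins--Strebel type existence theorems. The condition $n>\chi(C)$ is crucial: it guarantees that $C\setminus\{x_1,\ldots,x_n\}$ is hyperbolic (its Euler characteristic is negative), so that a small loop $\gamma_i$ encircling $x_i$ is essential and non-homotopic to $\gamma_j$ for $j\neq i$, and each $\gamma_i$ bounds a nontrivial homotopy class of ring domains separating $x_i$ from the other punctures. This topological nontriviality is what will make the extremal problem below well-posed.

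First I would set up the following extremal problem. Let $\mathcal{R}$ denote the set of configurations $(R_1,\ldots,R_n)$ of pairwise disjoint ring domains on $C$ with $R_i$ homotopic to a small loop around $x_i$. For each such configuration write $M(R_i)$ for its conformal modulus. The functional to consider is
\begin{equation*}
\Phi(R_1,\ldots,R_n)=\sum_{i=1}^n p_i^{2}\,M(R_i),
\end{equation*}
and the candidate extremal metric is $|\phi(z)|\,|dz|^2$ for the desired quadratic differential. I would show $\Phi$ is bounded above over $\mathcal{R}$ (using, e.g., the length-area inequality on $C$ relative to a background hyperbolic metric) and that a maximizing sequence subconverges. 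Compactness is delicate because moduli can in principle degenerate; here the control of all $p_i$ being positive and the finiteness of $n$ permits passing to a limit configuration after extracting annular cores, in the spirit of Jenkins' original module argument.

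Next I would identify the maximizer with a Jenkins--Strebel differential. Given a maximizer $(R_1^\ast,\ldots,R_n^\ast)$, on each $R_i^\ast$ the extremal metric for the modulus is a flat cylinder metric, coming from a quadratic differential $\phi_i = -\frac{p_i^2}{4\pi^2}\,dw_i^2$ in the natural ``period'' coordinate $w_i$ of the annulus. The first-variation (Euler--Lagrange) condition forces these local differentials to patch together to a single global holomorphic quadratic differential $\phi$ on $C\setminus\{x_1,\ldots,x_n\}$, with a double pole of prescribed residue at each $x_i$ producing circumference $p_i$. The closed-horizontal-trajectory property (and hence the JS condition) follows because any open trajectory entering the complement of $\bigcup R_i^\ast$ would permit an enlargement of one of the $R_i^\ast$, contradicting maximality; this simultaneously shows that the $R_i^\ast$ are the \emph{maximal} ring domains of $\phi$.

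For uniqueness I would use strict convexity of the module functional under the natural ``admissible metric'' formulation: if two JS differentials $\phi,\phi'$ both satisfied the conclusion, then the metric $\tfrac12(|\phi|^{1/2}+|\phi'|^{1/2})|dz|$ would be admissible with strictly greater value of $\sum p_i^2 M(R_i)$ unless $\phi=\phi'$, by the equality case of Cauchy--Schwarz applied on each ring domain. The main obstacle, and where Strebel's original argument is technically heaviest, is the compactness step: ruling out that a maximizing sequence degenerates by having some $R_i^{(k)}$ pinch or swallow another puncture. This is handled by an a priori lower bound on each modulus coming from the fixed hyperbolic geometry of $C\setminus\{x_1,\ldots,x_n\}$ together with the positivity of the weights $p_i$, ensuring the limit configuration is still an element of $\mathcal{R}$.
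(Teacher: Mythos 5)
The paper does not actually prove this statement: it is quoted as Strebel's theorem and used as a black box in Kontsevich's first step, so there is no in-paper argument to compare yours against. Your outline does follow the classically correct strategy (an extremal problem over configurations of disjoint ring domains, identification of the maximizer with a Jenkins--Strebel differential, uniqueness via a length--area/convexity argument), and your reading of the hypothesis $n>\chi(C)$ as hyperbolicity of the punctured surface is correct.

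However, there is a genuine gap at the heart of your compactness step. The ring domains relevant to this theorem are \emph{punctured disks} around the $x_i$, and a punctured disk is conformally a half-infinite cylinder of infinite modulus; moreover, even if you restrict $\mathcal{R}$ to honest annuli homotopic to a small loop about $x_i$, the supremum of $M(R_i)$ over that homotopy class is $+\infty$, since such annuli can converge to the punctured disk. So the functional $\Phi=\sum_i p_i^{2}M(R_i)$ is not bounded above --- the length--area bound you invoke fails --- and the extremal problem as posed is vacuous. The correct formulation replaces $M(R_i)$ by the \emph{reduced modulus} of $R_i$ at $x_i$ with respect to a fixed local coordinate $z_i$, namely $\hat M(R_i)=\lim_{r\to 0}\bigl(M(R_i\setminus\{|z_i|\le r\})-\tfrac{1}{2\pi}\log\tfrac{1}{r}\bigr)$, which is finite and bounded above on the admissible class (the additive ambiguity from the choice of $z_i$ is harmless for the variational argument); equivalently, one exhausts by surfaces with small disks about the $x_i$ removed, solves the finite-modulus problems there, and passes to a limit. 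Only after this correction do the first-variation argument, the identification of $\phi$ as having a double pole with leading coefficient $-p_i^{2}/(4\pi^{2})$ at $x_i$, and your Cauchy--Schwarz uniqueness argument (which also needs care, since $|\phi|^{1/2}|dz|$ has infinite area near each double pole and must be handled cylinder by cylinder) go through.
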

 Based on Strebel's theorem, Kontsevich found the one to one correspondence between the fat graphs, which are formed by the closed horizontal trajectories, which end at the zeros of the JS forms.  and the product space $\bar{\mathcal{M}}_{g,n}\times \mathbb{R}^n_{+}$. Each graph carries the following structures\\
(1)for each vertex a cyclic order on the set of germs of edges meeting this vertex is fixed;\\
(2) to each edge is attached a positive real number, its length(which is determined by the metric);\\
(3)the valency of each vertex of a fat graph is  three(we can derive that each valency is at least 3 by changing to polar system);\\
(4) the loops of the graph is numbered by $1,2,\cdots,n$;\\
(5) we make these graphs double-line graphs(this is not required by the one to one correspondence theorem).\\
For a fat graph, denote $l_e$ the length of a edge(double) $e$ and for each face $f$, the perimeter $p_f=\sum_{e \subset f}l_e$. Then we have
\begin{align}
E-n-V=2g-2,
\end{align}
\begin{align}
2E=3V.
\end{align}
Kontsevich proved the first chern class $c_1(\mathcal{L}_i)$ can be written as (this step is not hard)
\begin{align}
\omega_i=\sum_{a,b\in f_i} d(l_1/p_i)\wedge d(l_b/p_i),
\end{align}
and we can define a volume form on the fat graph space
which is $\Omega^d/d!$, where $\Omega=\sum_i p_i^2 \omega_i$.

Now since the volume form on the fat graph space $M^{comb}$ is defined, we can compute  the Laplace transform with respect to $p_1,p_2,\cdots,p_n$
\begin{align}
& \prod_{i=1}^n (\int_0^\infty e^{-\lambda_i p_i}dp_i) \int \frac{(\sum_i p_i^2 \omega_i)^{d}}{(d)!}\nonumber\\
=& 2^{d} \sum_{d_1+d_2+\cdots+d_n=d} \langle \tau_{d_1}\cdots\tau_{d_n}\rangle \prod_{i=1}^n (2d_i-1)!! \lambda_i^{-2d_i-1}.
\end{align}
On the other hand, by a very very delicate argument on complex cohomology, one has
\begin{align}\label{1.9}
1/d! \prod_i dp_i \wedge (\sum_i \sum_{a,b\subset f_i} dl_1\wedge dl_b)^d=2^{5g-5+2n} dl_1\wedge dl_2\cdots dl_E.
\end{align}
In the right hand side, we endow a orientation. Therefore we get the main identity
\begin{align}
\sum_{d_*:\sum d_i=d}\langle \tau_{d_1}\cdots \rangle \prod_{i=1}^n \frac{{(2d_i-1)!!}}{\lambda_i^{2d_i+1}}
=\sum_{\Gamma\in G_{g,n}} \frac{2^{-V(\Gamma)}}{|Aut \Gamma|} \prod_{e\in E}\frac{2}{\tilde{\lambda}(e)}. \label{main}
\end{align}

In the derivation of the factor in the right hand side of (\ref{1.9}), the torsion of chain complex plays an important role(see Appendix C \cite{K}).  The following  definitions and theorems are useful.
\begin{definition}
The chain complex $C$ is said to be acyclic if the homology $H_i(C)=0$ for all $i$. The chain complex $C$ is said to be based if each $C_i$ has a distinguished basis $c_i$.
\end{definition}
Then the torsion is defined by
\begin{definition}
The torsion of $C$ is
\begin{align}
\tau(C)= \prod_i [b_i b_{i-1}/c_i]^{(-1)^{i+1}} \in \mathbb{F}^*.
\end{align}
\end{definition}

Let $0\rightarrow C'\rightarrow C \rightarrow C^{''}\rightarrow 0$ be a short exact sequence of chain complexes. For a fixed $i$, we have a short exact sequence $A_i$
\begin{align}
0\rightarrow C_i'\rightarrow C_i\rightarrow C_i^{''}.
\end{align}
The following formula is straightforward
\begin{align}
\tau(C)=(\text{sign} \prod_i \tau(A_i)) \tau(C')\tau(C^{''}).
\end{align}

 Let $V$ be a vector space over the field $\mathbb{F}$. Then let $\Omega_i=\wedge^i V$. For an acyclic chain complex
$C$, if each $C_i$ is a linear subspace of $\Omega_i$, we can define a generalized torsion on this chain complex
\begin{definition}
\begin{align}
\tau(C)=[b_i \wedge b_{i-1}/(\wedge_{i=1}^n e_i ]^{(-1)^{i+1}},
\end{align}
where $b_i$ is the wedge product of the basis elements of $B_i=\text{Im} (\partial_i: C_{i+1}\rightarrow C_i)$. $\{e_i\}$ is a basis of the vector space $V$.
\end{definition}

By the virtue of the following theorem, we are able to transfer the torsion of $C\otimes \mathbb{R}$ to a computation of cohomology of $C$ \cite{Tu}:
\begin{theorem}
Let $R$ be a Noetherian unique factorization domain. Let $C=(C_m\rightarrow\cdots\rightarrow C_0)$ be a based free chain complex of finite rank over $R$ such that $\text{rk} H_i(C)=0$ for all $i$. Let $\tilde{R}$ be the field of fractions of $R$. Then the based chain complex $\tilde{C}=\tilde{R}\otimes_R C$ is acyclic and
\begin{align}
\tau(\tilde{C})=\prod_{i=0}^m (\operatorname{ord} H_i(C))^{(-1)^{i+1}}.
\end{align}
\end{theorem}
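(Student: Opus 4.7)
The plan is to establish acyclicity first and then compute $\tau(\tilde{C})$ by reducing the global formula to degree-by-degree contributions coming from each $H_i(C)$. First I would observe that the hypothesis $\operatorname{rk} H_i(C)=0$ means each $H_i(C)$ is a finitely generated torsion $R$-module. Since localization is exact, $H_i(\tilde{C})=\tilde{R}\otimes_R H_i(C)$, and any torsion element is annihilated by some nonzero $r\in R$ which becomes a unit in $\tilde{R}$. Hence $H_i(\tilde{C})=0$ for all $i$, so the generalized torsion $\tau(\tilde{C})$ is well defined.

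For the torsion computation itself, I would use the pair of short exact sequences
$$0\to Z_i\to C_i\xrightarrow{\partial_i} B_{i-1}\to 0,\qquad 0\to B_i\to Z_i\to H_i\to 0,$$
together with the multiplicativity of torsion under short exact sequences already recorded in the excerpt. The modules $B_i$ and $Z_i$ are torsion-free, so after tensoring with the field $\tilde{R}$ they become finite-dimensional $\tilde{R}$-vector spaces. I would choose lifts $\tilde{b}_i\subset \tilde{C}_{i+1}$ of a basis of $\tilde{B}_i$ and then, using acyclicity of $\tilde{C}$ to identify $\tilde{B}_i=\tilde{Z}_i$, express the distinguished basis $c_i$ of $\tilde{C}_i$ in terms of the compatible basis $\tilde{b}_i\cup \tilde{b}_{i-1}$; the determinant of this change-of-basis matrix is precisely the factor $[b_i\wedge b_{i-1}/c_i]$ appearing in the definition of $\tau(\tilde{C})$.

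The key identification to push through is $[b_i\wedge b_{i-1}/c_i]=\operatorname{ord} H_i(C)$ up to units. I would argue this by working prime-by-prime: localizing at each height-one prime $\mathfrak{p}$ of the UFD $R$ gives a discrete valuation ring $R_\mathfrak{p}$, over which Smith normal form applies and reduces the problem to the diagonal model $\partial_i=\operatorname{diag}(f_1,\dots,f_k)$. In this model the change-of-basis determinant contributes exactly $\prod_j f_j$, which is the order of the corresponding $\mathfrak{p}$-primary component of $H_{i-1}(C)$. Reassembling the local contributions through the alternating product in the definition of $\tau(\tilde{C})$, and tracking the sign $(-1)^{i+1}$ coming from the degree shift $B_i\subset C_{i+1}$ versus $B_i\subset Z_i$, yields
$$\tau(\tilde{C})=\prod_{i=0}^m (\operatorname{ord} H_i(C))^{(-1)^{i+1}}.$$

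The main obstacle I anticipate is the passage from a general Noetherian UFD to the PID/DVR case in a way compatible with the notion of order: over an arbitrary UFD a finitely generated torsion module need not split as a direct sum of cyclic modules, so $\operatorname{ord} H_i(C)$ must be defined intrinsically, e.g.\ as the generator of the first nonzero Fitting ideal or as the product of invariant factors obtained after localizing at each height-one prime. The nontrivial ingredient is the multiplicativity of this invariant under short exact sequences of torsion modules, which is precisely what glues the local Smith-form computations into the claimed global identity. Once that multiplicativity is granted, the remainder of the argument is a direct determinant computation together with a careful bookkeeping of signs.
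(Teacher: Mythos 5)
The paper offers no proof of this statement: it is quoted from Turaev \cite{Tu} and used as a black box to justify the factor $2^{5g-5+2n}$ in the torsion computation, so there is no in-paper argument to compare yours against. Judged on its own, your outline is essentially the standard proof from the torsion literature, and its skeleton is sound: acyclicity of $\tilde{C}$ follows from exactness of localization plus the fact that torsion modules die over the fraction field; the computation is then organized by the two short exact sequences $0\to Z_i\to C_i\to B_{i-1}\to 0$ and $0\to B_i\to Z_i\to H_i\to 0$, multiplicativity of torsion, and the identification of the resulting determinants with $\operatorname{ord} H_i(C)$ defined via the first nonzero Fitting ideal, with multiplicativity of $\operatorname{ord}$ on short exact sequences of torsion modules supplied by localization at height-one primes (where $R_{\mathfrak{p}}$ is a DVR and $\operatorname{ord}$ becomes additive in length).

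Two points in your sketch need tightening. First, over a general Noetherian UFD the modules $B_i$ and $Z_i$ are torsion-free but need not be free, so ``a basis of $B_i$'' must be read as a basis of the $\tilde{R}$-vector space $\tilde{B}_i$ chosen among elements of $B_i$; this is harmless but should be said, since the brackets $[\,\cdot\,/c_i]$ are computed in $\tilde{C}$. Second, the degree-by-degree identity $[b_i b_{i-1}/c_i]=\operatorname{ord} H_i(C)$ is not correct for an arbitrary choice of the $b_i$: each individual bracket depends on the chosen $b_i$, and only the alternating product is an invariant. Either you must exhibit a single compatible system of $b_i$ (e.g.\ adapted simultaneously to the Smith normal forms at every relevant prime, which requires an argument since different primes give different adapted bases), or --- more cleanly, and closer to Turaev's actual route --- use multiplicativity of torsion under short exact sequences of \emph{complexes} to reduce to two-term complexes $C_{i+1}\to C_i$, where $\ker\partial$ is torsion-free hence zero, $\tau=(\det\partial)^{\pm 1}$, and $\operatorname{ord}(\operatorname{coker}\partial)=\det\partial$ because the zeroth Fitting ideal of a square presentation is generated by the determinant. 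With that repair the argument closes; as written, the ``reassembling'' step papers over the only genuinely delicate point.
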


\section{KP and KdV hierarchy}

\subsection{Solutions of KP hierarchy as the orbit of $GL_{\infty}$ action on $\mathbb{C}[x_1,x_2,\cdots,]$}
In Kac's presentation, the infinite dimensional group $\rm GL_{\infty}$ has a representation on infinite wedge space $\Lambda^\infty V$. By the fermion-boson correspondence,
$\rm GL_{\infty}$ has a representation on the space $B=\mathbb{C}[x_1,x_2,\cdots]$, the polynomial ring of infinite many variables.  Then we have a representation of
$\rm GL_\infty$ on $B$. Then the KP hierarchy is the orbit $\Omega$ of the the vacuum 1 in $B$ under the action of $\rm GL_\infty$, i.e., $\Omega=\rm GL_\infty \cdot 1$. Also, Dirac's positron theory can be given a representation-theoretic interpretation and used to obtain highest weight representations of these Lie algebras. The following are the relevant definitions and theorems.

\textbf{Representation theory}.We will follow Jorgensen's book \cite{J} and Kac's book \cite{K} to derive KP hierarchy. We offer a simple philosophy to get the the KP hierarchy  , which is a set of infinite many PDEs.
Let $\mathcal{A}$ be the Heisenberg algebra, the complex Lie algebra with a basis $\{a_n,n\in\mathbb{Z};\hbar\}$, with the commutation relations
\begin{align}
[\hbar, a_n]& =0,  \ (n\in\mathbb{Z}),\nonumber\\
[a_m,a_n]&= m \delta_{m,-n}\hbar  \   (m,n\in\mathbb{Z}).
\end{align},
and $\{L_n\}$ denotes the Virasoro algebra with central extension $c$,i.e.,
\begin{align}
[L_n,L_m]=(n-m)L_{n+m}+\frac{\delta_{m,-n}}{12}c.
\end{align}

One thing that is nice in the context of our two infinite systems of
operators $a_n$ and $L_k$ below in sections 4 and 5 is the following close
analogue to an important family of unitary representations of Lie groups. It
is in fact a natural extension of what was first realized for finite
dimensional groups as the Weil-Segal-Shale representations of the
metaplectic groups. Details below: Consider the following setting for a
finite-dimensional Heisenberg group $H$. Let $G$ be the corresponding group of
automorphisms of $H$ which fix the center.
Then $G$ is a finite dimensional Lie group, the metaplectic group. Pick a
Schr\"{o}edinger representation of $H$, and compose it with an automorphism, so an
element in $G$. The result is a second representation of $H$. By the Stone-von
Neumann uniqueness theorem the two representations are unitarily equivalent,
and so the equivalence is implemented by a unitary operator $U(g)$. By passing
to a double cover of $\tilde{G}$ one can show that $U(g)$ in fact then defines a
unitary representation of $\tilde{G}$.
   If we now pass to the corresponding Lie algebras $L(H)$ and $L(G)$ we see
that $L(H)$ is normalized by $L(G)$. Moreover $L(H)$ in the Schr\"{o}edinger
representation is spanned by Heisenberg's canonical operators $P$, $Q$, and the
one-dimensional center; here we write $P$ for momentum and $Q$ for position,
possibly with several degrees of freedom.

\textbf{Highest Weights.} By comparison, in the Weil representation, the Lie algebra $L(G)$ is then spanned
by all the quadratic polynomials in the $P$s and the $Q$s.
  Now the Stone-von Neumann uniqueness theorem is not valid for an infinite
number of degrees of freedom, but nonetheless, the representations of the
two Lie algebras we present by the infinite systems of operators $a_n$ and $L_k$
in sections 4 present themselves as a close analogy to the Weil
representations in the case of Lie groups, i.e., the case of a finite number
of degrees of freedom. Our infinite-dimensional Virasoro Lie algebra spanned
by the infinite system $\{a_n\}$, and it is a central extension; hence a direct
analogue of the Heisenberg Lie algebra. Similarly, our infinite-dimensional
Lie algebra of quantum fields spanned by $\{L_k\}$ normalizes the Virasoro Lie
algebra, and so it is a direct analogue of the Lie algebra of operators $L(G)$
in the finite-dimensional case.
    In both of these cases of representations, the operators in the
respective Lie algebras are unbounded but densely defined in the respective
infinite-dimensional Hilbert spaces. We show that our representations of the
Lie algebra of quantum fields spanned by $\{L_k\}$ may be obtained with the use
of highest weight vectors, and weights.

Define the Fock space $B:=\mathbb{C}[x_1,x_2,\cdots]$.
Given $\mu,\hbar\in \mathbb{R}$, define the following representation of $\mathcal{A}$ on $B$ ($n\in\mathbb{N}$):
\begin{align}
a_n &=\epsilon_n \partial/\partial x_n,\nonumber\\
a_{-n} &=\hbar \epsilon^{-1}_n n x_n,\nonumber\\
a_0 &=\mu I,\nonumber\\
\hbar &=\hbar I.
\end{align}
When $\hbar\neq 0$, the representation is irreducible, since one can get any polynomial

Let $V=\oplus_{j\in\mathbb{Z}} \mathbb{C}v_j$ be an infinite dimensional vector space over $\mathbb{C}$ with a basis $\{v_j;j\in \mathbb{Z}\}$.

The Lie algebra $gl_\infty :=\{(a_{ij})_{i,j\in\mathbb{Z}}; \text{all but a finite number of the $a_ij$ are 0}\}$,
with the Lie bracket being the ordinary matrix commutator.

The Lie algebra $gl_\infty$ is the Lie algebra of the Lie group
\begin{align}
\rm GL_\infty=\{A=(a_{ij})_{i,j\in\mathbb{Z}};\text{$A$ invertible and all but a finite number of $a_{ij}-\delta_{ij}$ are zero} \}
\end{align}.
The group action is matrix multiplication.
Define the shift operator $\Lambda_k$ by
\begin{align}
\Lambda_k v_j=v_{j-k}.
\end{align}

Then the representation of $Vect$ in $V_{\alpha,\beta}$ in the vector space $V$ can be
\begin{align}
L_n(v_k)=(k-\alpha-\beta(n+1))v_{k-n},
\end{align}
which implies
\begin{align}
L_n=\sum_{k\in\mathbb{Z}} (k-\alpha-\beta(n+1))E_{k-n,k}.
\end{align}
Then $L_n\in \bar{a_\infty}$.

\begin{definition}
The elementary Schur polynomials $S_k(x)$ are polynomials belonging to $\mathbb{C}[x_1,x_2,\cdots]$ and are defined by the generating function
\begin{align}
\sum_{k\in\mathbb{Z}} S_k(x) z^k= \exp(\sum_{k=1}^\infty) x_k z^k.
\end{align}
\end{definition}

We list the following propositions and theorems in Kac's book \cite{K} without proofs. These propositions or theorems are useful in this paper.
Then in \cite{K}, there are the following
\begin{definition}
The generating series are defined by
\begin{align}
X(u):=\sum_{j\in\mathbb{Z}} u^j \hat{v}_j,  \   X^*(u)=\sum_{j\in\mathbb{Z}} u^{-j} \check{v}_j^*,
\end{align}
where $u$ is a nonzero complex number.
 \end{definition}

\begin{proposition}
$\Gamma(u)$ and $\Gamma^*(u)$ have the following form on $B^{(m)}$:
\begin{align}
\Gamma(u)|_{\hat{B}^{(m)}}=u^{m+1} z \exp(\sum_{j\geq 1}u^j x_j)\exp(-\sum_{j\geq 1} \frac{u^{-j}}{j}\frac{\partial}{\partial x_j}),\nonumber\\
\Gamma^*(u)|_{\hat{B}^{(m)}}=u^{-m} z^{-1} \exp(-\sum_{j\geq 1}u^j x_j)\exp(\sum_{j\geq 1} \frac{u^{-j}}{j}\frac{\partial}{\partial x_j}).
\end{align}
\end{proposition}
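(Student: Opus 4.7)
The plan is to derive both formulas by combining two ingredients: (i) explicit commutation relations between $X(u), X^*(u)$ and the Heisenberg operators $a_n$ in the fermionic picture, and (ii) the boson-fermion correspondence, which identifies the fermionic Fock space with $B = \bigoplus_m z^m \mathbb{C}[x_1,x_2,\dots]$ by sending $a_n \mapsto \partial/\partial x_n$ for $n>0$, $a_{-n} \mapsto n x_n$ for $n>0$, and the charge operator to multiplication by $z$.

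First I would work in the semi-infinite wedge space $\Lambda^{\infty} V$ and compute $[a_n, X(u)]$ and $[a_n, X^*(u)]$ for $n\neq 0$. Writing $a_n$ in its normal-ordered bilinear form in the fermions and using $\{\hat v_i,\check v^*_j\}=\delta_{ij}$, a short calculation yields $[a_n,X(u)]=u^n X(u)$ and $[a_n,X^*(u)]=-u^n X^*(u)$. Transporting these via the boson-fermion correspondence to $B$, they become
\begin{align*}
[\partial/\partial x_n,\Gamma(u)] = u^n \Gamma(u), \qquad [n x_n,\Gamma(u)] = -u^{-n}\Gamma(u),
\end{align*}
and the analogous identities with opposite signs for $\Gamma^*(u)$.

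Next I would introduce the candidate operator $E(u):=\exp\bigl(\sum_{j\geq 1} u^j x_j\bigr)\exp\bigl(-\sum_{j\geq 1}\frac{u^{-j}}{j}\partial/\partial x_j\bigr)$ and check, using only $[\partial/\partial x_n, x_n]=1$, that $E(u)$ satisfies exactly the same commutation relations with the Heisenberg generators as $\Gamma(u)$. Hence $\Gamma(u)E(u)^{-1}$ commutes with every $a_n$ for $n\neq 0$. By irreducibility of the Heisenberg action on each charge sector, this operator must act on $\hat B^{(m)}$ as a scalar times the charge-raising map $z$; the symmetric argument shows $\Gamma^*(u)E(u)$-type combination acts as a scalar times $z^{-1}$.

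Finally I would pin down these scalars by evaluating both sides on the vacuum $1\in \hat B^{(m)}$, which corresponds to the charge-$m$ fermionic vacuum $|m\rangle$. The second exponential in $E(u)$ kills no $x_j$'s and acts as the identity on $1$, while in the fermionic picture $X(u)|m\rangle$ is a direct wedge computation that produces the single term $u^{m+1}|m+1\rangle$; matching yields the prefactor $u^{m+1}z$. The analogous computation $X^*(u)|m\rangle = u^{-m}|m-1\rangle$ (up to sign conventions) gives $u^{-m}z^{-1}$. The main obstacle, and the step I expect to require the most care, is the bookkeeping of signs and the $u$-powers arising from normal-ordering the fermionic bilinears when computing $[a_n,X(u)]$ and when evaluating $X(u)$ on $|m\rangle$; a central-extension contribution there would shift $u^{m+1}$ by a unit. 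I would verify the correct exponent by working out the cases $m=0$ and $m=1$ explicitly before asserting the general formula.
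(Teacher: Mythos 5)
The paper does not prove this proposition at all; it is quoted verbatim from Kac's book, and your outline is precisely the standard vertex-operator argument from that source (commute $X(u),X^*(u)$ past the Heisenberg generators, match the candidate exponential via Schur's lemma on each irreducible charge sector, then fix the scalar by the wedge computation $X(u)|m\rangle = u^{m+1}|m+1\rangle + \cdots$), so your approach is correct and is the intended one. One concrete correction: with the paper's convention $\Lambda_k v_j = v_{j-k}$ one gets $[a_n, X(u)] = u^n X(u)$ for \emph{all} $n\neq 0$, hence $[n x_n, \Gamma(u)] = +u^{-n}\Gamma(u)$, not $-u^{-n}\Gamma(u)$ as you wrote; the sign you have would make $\Gamma(u)E(u)^{-1}$ fail to commute with the $x_n$ and break the Schur step, but your own planned consistency check against $E(u)$ (which satisfies $[n x_n, E(u)] = u^{-n}E(u)$) would catch and fix this.
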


Then the representation $g\ell_\infty$ can be determined by the isomorphism
\begin{align}
\sigma_m:F^{(m)}\rightarrow B^{(m)}.
\end{align}

$E_{ij}$ is represented by $\hat{v}_i\check{v}_j$.

Consider the generating function
\begin{align}
\sum_{i,j\in\mathbb{Z}}u^{i}v^{-j}E_{ij}.
\end{align}

The representation in $\hat{F}$ of this generating function under $r$ is
\begin{align}
\sum_{i,j\in\mathbb{Z}} u^i v^{-j} r^B(E_{ij})\equiv \sigma_m(X(u)X^*(v))\sigma_m^{-1}=\frac{(u/v)^m}{1-(v/u)}\Gamma(u,v),
\end{align}
where $\Gamma(u,v)$ is the vertex operator
\begin{align}
\Gamma(u,v)=\exp(\sum_{j\geq 1}(u^j-v^j)x_j)\exp(-\sum_{j\geq 1}\frac{u^{-j}-v^{-j}}{j}\frac{\partial}{\partial x_j}).
\end{align}

\begin{proposition}
If $\tau\in\Omega$, then $\tau$ is a solution of the equation
\begin{align}
\sum_{j\in\mathbb{Z}}  \check{v_j}(\tau)\otimes \hat{v_j}(\tau)=0. \label{7.2}
\end{align}
Conversely, if $\tau\in F^{(0)}$, $\tau\neq 0$ and $\tau$ satisfies (\ref{7.2}), then $\tau\in\Omega$.
\end{proposition}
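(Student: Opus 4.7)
The plan is to prove the two implications separately by exploiting the Casimir-like tensor $T := \sum_{j \in \mathbb{Z}} \check{v}_j \otimes \hat{v}_j$ acting on $F \otimes F$. For the forward direction, the argument reduces to the two facts that (a) $T$ annihilates $1 \otimes 1$, and (b) $T$ commutes with the diagonal $GL_\infty$-action. For (a), in each summand either $j > 0$, in which case $\check{v}_j$ kills the vacuum (since $v_j$ is absent from $v_0 \wedge v_{-1} \wedge \cdots$), or $j \le 0$, in which case $\hat{v}_j$ kills it (since $v_j$ is already present); thus every term vanishes. For (b), using the identification $E_{ik} = \hat{v}_i \check{v}_k$ and the canonical anticommutation relations $\{\hat{v}_i, \check{v}_j\} = \delta_{ij}$, $\{\hat{v}_i, \hat{v}_j\} = \{\check{v}_i, \check{v}_j\} = 0$, I would derive
$$[E_{ik}, \hat{v}_j] = \delta_{kj}\, \hat{v}_i, \qquad [E_{ik}, \check{v}_j] = -\delta_{ij}\, \check{v}_k,$$
from which $[E_{ik} \otimes 1 + 1 \otimes E_{ik},\, T]$ telescopes to zero. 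Integrating this $g\ell_\infty$-invariance to $GL_\infty$, for any $\tau = g \cdot 1$ we then get $T(\tau \otimes \tau) = (g \otimes g)\, T(1 \otimes 1) = 0$, which is exactly \eqref{7.2}.

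For the converse, the strategy is to interpret \eqref{7.2} as a semi-infinite analogue of the Plücker relations singling out decomposable (pure) wedges inside $\Lambda^\infty V$. Given nonzero $\tau \in F^{(0)}$ with $T(\tau \otimes \tau) = 0$, I would attach to $\tau$ the subspace
$$U(\tau) := \{\, v \in V : \hat{v}\, \tau = 0 \,\} \subset V,$$
where $\hat{v}$ denotes the creation operator associated to $v \in V$. Using the bilinear equation together with the anticommutation relations, one shows that $U(\tau)$ is a linear subspace of $V$ differing from $\mathrm{span}\{v_0, v_{-1}, v_{-2}, \dots\}$ only by a finite-dimensional perturbation, and that $\tau$ equals the semi-infinite wedge of any admissibly ordered basis of $U(\tau)$ up to a nonzero scalar. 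Choosing any $g \in GL_\infty$ that sends $\{v_{-i}\}_{i \ge 0}$ to such a basis of $U(\tau)$ then yields $\tau = g \cdot 1$, placing $\tau$ in $\Omega$.

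The forward direction is essentially a formal consequence of the commutator identity $[\Delta(E_{ik}), T] = 0$, where $\Delta(X) = X \otimes 1 + 1 \otimes X$, and presents no real difficulty. The main obstacle lies in the converse: recovering decomposability of $\tau$ from the single quadratic identity \eqref{7.2} is the infinite-dimensional analogue of the classical theorem that Plücker relations characterize the image of the Grassmannian under the Plücker embedding. I would follow Kac's approach in \cite{K}, expanding $\tau = \sum_S c_S\, v_S$ over the basis of $F^{(0)}$ indexed by charge-zero Maya diagrams, translating \eqref{7.2} into a recursion among the coefficients $c_S$, and inducting on the energy grading of $F^{(0)}$ to progressively factor $\tau$ as a creation operator applied to a shorter wedge, until only the vacuum remains. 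The restriction to the charge-zero subspace $F^{(0)}$ is essential for controlling the infinite tail of the wedge throughout this process.
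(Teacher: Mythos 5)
The paper itself contains no proof of this proposition: it is one of the statements explicitly ``listed \ldots without proofs'' and imported from Kac's book, so there is no in-paper argument to compare yours against; the relevant benchmark is the standard proof in Kac--Raina. Measured against that, your forward direction is essentially that proof and is complete: the term-by-term vanishing of $T(1\otimes 1)$ (for $j>0$ the contraction kills the vacuum, for $j\le 0$ the wedging does), the commutators $[E_{ik},\hat v_j]=\delta_{kj}\hat v_i$ and $[E_{ik},\check v_j]=-\delta_{ij}\check v_k$ from the canonical anticommutation relations, and the resulting telescoping of $[\Delta(E_{ik}),T]$ are all correct. Two small points worth making explicit there: the normal-ordering constant that distinguishes $\hat r(E_{ii})$ from $r(E_{ii})$ on the diagonal drops out of any commutator, so the $g\ell_\infty$-invariance is unaffected; and every element of $\rm GL_\infty$ as defined here lies in some $\rm GL_n(\mathbb{C})$, hence is a finite product of exponentials, so the ``integration'' step is legitimate (alternatively one can verify $ (g\otimes g)T(g\otimes g)^{-1}=T$ directly at the group level, which is what Kac--Raina do).

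The converse half is where your write-up remains a proposal rather than a proof. Interpreting \eqref{7.2} as the semi-infinite Pl\"ucker relation and recovering decomposability via $U(\tau)=\{v\in V:\hat v\,\tau=0\}$ is the right strategy, but the two assertions you attribute to ``one shows'' --- that $U(\tau)$ is a finite-dimensional perturbation of $\mathrm{span}\{v_0,v_{-1},\dots\}$ and that $\tau$ equals the wedge of a basis of $U(\tau)$ up to a nonzero scalar --- are precisely the content of the hard step, and no argument is given for them. The standard execution is to use the $\rm GL_\infty$-stability of both $\Omega$ and the solution set of \eqref{7.2} to reduce to the case where the coefficient of the vacuum in $\tau$ is $1$, then pair \eqref{7.2} against basis monomials to obtain quadratic relations among the coefficients $c_S$ and run the induction on the energy grading; that pairing and induction are the actual work, and without them the converse would not survive refereeing. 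The skeleton, however, is the correct one and matches the source the paper defers to.
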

\begin{proposition}
The Schur polynomials $S_{\lambda}(x)$ are contained in $\Omega$.
\end{proposition}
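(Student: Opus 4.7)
The plan is to pass to the fermionic side via the boson-fermion isomorphism $\sigma_0 : F^{(0)} \to B^{(0)}$ and exhibit the Schur polynomials as images of pure (decomposable) semi-infinite wedges, which are easily seen to belong to $\Omega$. The identification I would invoke is the classical one: for a partition $\lambda = (\lambda_1 \geq \lambda_2 \geq \cdots \geq 0)$, setting $i_k := \lambda_{k+1} - k - 1$ (a strictly decreasing sequence of integers that stabilizes to $i_k = -k - 1$ as soon as $\lambda_{k+1} = 0$), one has
\begin{equation*}
\sigma_0^{-1}\bigl(S_\lambda(x)\bigr) \;=\; v_{i_0}\wedge v_{i_1}\wedge v_{i_2}\wedge \cdots \;\in\; F^{(0)}.
\end{equation*}
This can be verified either by applying the formulas from the previous proposition for $\Gamma(u), \Gamma^*(u)$ to compute the image of such a wedge and recognizing the Jacobi--Trudi determinantal formula for $S_\lambda$ in terms of the elementary Schur polynomials $S_k(x)$, or by the standard character-theoretic derivation (Kac \cite{K}, Lecture 14).

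Having recast $S_\lambda$ as a pure wedge $\omega := v_{i_0}\wedge v_{i_1}\wedge\cdots$, I would conclude $\omega \in \Omega$ in either of two equivalent ways. First, directly: because $i_k = -k-1$ for all sufficiently large $k$, the wedge $\omega$ differs from the vacuum $|0\rangle = v_{-1}\wedge v_{-2}\wedge\cdots$ only in finitely many entries, so it is obtained from $|0\rangle$ by applying an element of $GL_\infty$ acting non-trivially on a finite-dimensional subspace spanned by the relevant $v_j$. Hence $\omega = g\cdot |0\rangle$ for some $g\in GL_\infty$, so $\omega \in \Omega$ by definition. Second, as a check: pure wedges satisfy the Pl\"ucker relations, i.e.\ the bilinear identity $\sum_{j\in\mathbb{Z}} \check v_j(\omega)\otimes\hat v_j(\omega) = 0$, which follows from a straightforward antisymmetry calculation in the semi-infinite exterior algebra; by the converse direction of the preceding proposition this again gives $\omega \in \Omega$.

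Combining, $S_\lambda = \sigma_0(\omega)$ lies in the bosonic image of $\Omega$, which is the statement being proved.

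The main obstacle is the first step, the explicit identification of $\sigma_0^{-1}(S_\lambda)$ with a semi-infinite monomial. One must be careful about bookkeeping: the shift in indices by $-k-1$ relative to the parts of $\lambda$, the charge-$0$ sector normalization, and the sign conventions for the ordered semi-infinite wedge all have to be consistent with the generating-function formulas for $\Gamma(u), \Gamma^*(u)$. Once this dictionary is set up correctly, both the $GL_\infty$-orbit argument and the Pl\"ucker-identity verification are essentially immediate.
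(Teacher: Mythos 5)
The paper states this proposition without proof, deferring to Kac's book \cite{K}, and your argument is precisely the standard one from that reference: identify $S_\lambda$ under $\sigma_0$ with a semi-infinite monomial, observe that such a monomial differs from the vacuum in only finitely many factors and hence lies in the $\rm GL_\infty$-orbit (or, equivalently, satisfies the bilinear/Pl\"ucker identity of the preceding proposition). This is correct, and your flagged caveat about index conventions is only a matter of bookkeeping, not a gap.
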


\begin{proposition}
A nonzero element $\tau$ of $\mathbb{C}[x_1,x_2,\cdots]$ is contained in $\Omega$ if and only if the coefficient of $\mu^0$ vanishes in the expression:
\begin{align}
u\exp(-\sum_{j\geq 1} 2u^j y_j)\exp(\sum_{j\geq 1}
\frac{u^{-j}}{j}\frac{\partial}{\partial y_j})\tau(x-y)\tau(x+y).
 \end{align}
\end{proposition}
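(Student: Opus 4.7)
The plan is to translate the fermionic bilinear identity of the previous proposition into the bosonic picture using the boson--fermion correspondence $\sigma:F\to B$ together with the explicit formulas for the vertex operators $\Gamma(u)$ and $\Gamma^*(u)$ listed earlier in this section. The central observation is that the fermionic sum
$$
\sum_{j\in\mathbb{Z}} \check v_j(\tau)\otimes\hat v_j(\tau)
$$
is nothing but the coefficient of $u^0$ in the generating product $X^*(u)\otimes X(u)$ applied to $\tau\otimes\tau$; after passing through $\sigma$, the same statement is the vanishing of the $u^0$-coefficient of $\Gamma^*(u)\tau\otimes\Gamma(u)\tau$ (the fixed charge shift factors $z^{-1}\otimes z$ are inert since both copies of $\tau$ lie in $B^{(0)}$). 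So the task reduces to a direct bosonic computation and a change of variables that puts the answer into the advertised form.

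For the computation I would apply the formulas
$$
\Gamma(u)|_{B^{(0)}}=u\,z\exp\!\Bigl(\sum_{j\geq1}u^jx_j\Bigr)\exp\!\Bigl(-\sum_{j\geq1}\frac{u^{-j}}{j}\partial_{x_j}\Bigr),\qquad \Gamma^*(u)|_{B^{(0)}}=z^{-1}\exp\!\Bigl(-\sum_{j\geq1}u^jx_j\Bigr)\exp\!\Bigl(\sum_{j\geq1}\frac{u^{-j}}{j}\partial_{x_j}\Bigr)
$$
to two separate copies $\tau(x')$ and $\tau(x'')$, and use the translation identity $\exp(\pm\sum\frac{u^{-j}}{j}\partial_{x_j})\tau(x)=\tau(x\pm\epsilon(u))$ where $\epsilon(u)=(u^{-1},u^{-2}/2,u^{-3}/3,\ldots)$. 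The product is then
$$
u\exp\!\Bigl(\sum_{j\geq1}u^j(x'_j-x''_j)\Bigr)\,\tau\bigl(x'-\epsilon(u)\bigr)\,\tau\bigl(x''+\epsilon(u)\bigr).
$$

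Next I would perform the change of variables $x'=x-y$, $x''=x+y$. The multiplicative factor becomes $u\exp(-2\sum u^jy_j)$ since $x'_j-x''_j=-2y_j$. For the shifted $\tau$-factors the key point is that the two opposite shifts $-\epsilon(u)$ on $x'$ and $+\epsilon(u)$ on $x''$ combine purely into a shift of $y$: writing $g(x,y)=\tau(x-y)\tau(x+y)$, one has $\tau(x-y-\epsilon(u))\tau(x+y+\epsilon(u))=g(x,y+\epsilon(u))=\exp(\sum\frac{u^{-j}}{j}\partial_{y_j})\,g(x,y)$. Combining the two steps reproduces precisely the expression displayed in the statement, so the vanishing of its coefficient of $u^0$ (the symbol $\mu^0$ in the proposition, which I read as a typographical variant of $u^0$) is equivalent to the fermionic bilinear relation of the preceding proposition, which in turn is equivalent to $\tau\in\Omega$.

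The main obstacle is bookkeeping rather than conceptual: one must carefully keep track of (i)~the charge shifts and the accompanying $z^{\pm1}$ factors that come from $X(u):F^{(0)}\to F^{(1)}$ and $X^*(u):F^{(0)}\to F^{(-1)}$, (ii)~Kac's sign conventions in the definitions of $X(u)$ and $X^*(u)$, and (iii)~whether the relevant fermionic tensor is $\sum\check v_j\otimes\hat v_j$ or its transposed version $\sum\hat v_j\otimes\check v_j$---the two differ by a flip of tensor factors and produce the opposite sign of $y$ in the exponential prefactor, so one must check that the convention chosen matches the $-2u^jy_j$ in the stated expression. Once those conventions are pinned down, the analytic core---the fact that two equal-magnitude, opposite-sign shifts on $(x',x'')$ collapse into a single $y$-shift on the product $\tau(x-y)\tau(x+y)$---is immediate from $e^{c\partial_a}f(a)=f(a+c)$ applied term by term.
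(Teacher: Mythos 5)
The paper does not actually prove this proposition---it is one of the statements explicitly quoted from Kac's book ``without proofs''---so there is no in-paper argument to compare against. Your proposal follows the standard Kac--Raina derivation, and it is correct in its essentials: the bilinear identity $\sum_j \check v_j(\tau)\otimes\hat v_j(\tau)=0$ of the preceding proposition is the $u^0$-coefficient of $X^*(u)\otimes X(u)$ acting on $\tau\otimes\tau$; transporting through $\sigma$ and using the displayed formulas for $\Gamma(u)|_{B^{(0)}}$ and $\Gamma^*(u)|_{B^{(0)}}$ gives $u\exp\bigl(\sum_j u^j(x'_j-x''_j)\bigr)\tau(x'\mp\epsilon(u))\tau(x''\pm\epsilon(u))$, and the substitution $x'=x-y$, $x''=x+y$ collapses the two opposite shifts into a single shift of $y$, yielding exactly the displayed expression (with $\mu^0$ read as $u^0$). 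The one point you flag---which tensor factor carries $\Gamma$ versus $\Gamma^*$---only toggles $y\mapsto -y$ in the final formula, and since the vanishing is required identically in the free parameters $y$, the two conventions give equivalent statements; it would be worth saying that last sentence explicitly rather than leaving it as an unresolved bookkeeping worry. You might also note, to close the loop with the theorem of Kashiwara--Miwa quoted just after, that expanding your expression in powers of $u$ via the elementary Schur polynomials $S_k$ is what converts the $u^0$-coefficient condition into the Hirota bilinear equations.
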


\begin{theorem}(Kashiwara and Miwa,1981)
A nonzero polynomial $\tau$ is contained in $\Omega$ if and only if $\tau$ is a solutioin of the following system of Hirota bilinear equations:
\begin{align}
\sum_{j=0}^\infty S_j(-2y)S_{j+1}(\tilde{x})\exp(\sum_{k\geq 1}y_k x_k)\tau(x)\cdot \tau(x)=0,
\end{align}
where $y_1,y_2,\cdots$ are free parameters.
\end{theorem}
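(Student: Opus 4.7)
The plan is to deduce the Hirota bilinear system directly from the preceding Proposition via three steps: expand the two exponentials appearing there in powers of $u$ using the Schur polynomial generating function, extract the coefficient of $u^0$, and reinterpret the result through the Hirota bilinear operator.

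Concretely, recall the generating identity $\exp\bigl(\sum_{k\geq 1} x_k z^k\bigr)=\sum_{k\geq 0} S_k(x) z^k$. Applying it first with $z=u$ and variables $-2y$, and then with $z=u^{-1}$ and variables $\tilde{\partial}_y:=(\partial_{y_1},\tfrac12\partial_{y_2},\tfrac13\partial_{y_3},\dots)$, one obtains
$$\exp\!\Bigl(-\sum_{j\geq 1} 2u^j y_j\Bigr)=\sum_{k\geq 0} S_k(-2y)\,u^k,\qquad \exp\!\Bigl(\sum_{j\geq 1}\tfrac{u^{-j}}{j}\tfrac{\partial}{\partial y_j}\Bigr)=\sum_{k\geq 0} S_k(\tilde{\partial}_y)\,u^{-k}.$$
Multiplying these two series together with the prefactor $u$ from the Proposition and collecting the terms with $1+k-j=0$, the coefficient of $u^0$ in the product is
$$\sum_{j\geq 0} S_j(-2y)\,S_{j+1}(\tilde{\partial}_y)\,\tau(x-y)\tau(x+y).$$
The Proposition asserts that this expression vanishes identically in $x$ and $y$ if and only if $\tau\in\Omega$, which is already the content of the theorem once one invokes the standard Hirota identification $\tau(x+y)\tau(x-y)=\exp\bigl(\sum_{k\geq 1} y_k D_{x_k}\bigr)\tau(x)\cdot\tau(x)$: under this identification $S_{j+1}(\tilde{\partial}_y)$ applied to the product $\tau(x-y)\tau(x+y)$ corresponds to $S_{j+1}(\tilde{D}_x)$ applied to $\tau(x)\cdot\tau(x)$, which is the meaning of the theorem's shorthand $S_{j+1}(\tilde{x})\exp(\sum y_k x_k)\tau(x)\cdot\tau(x)$.

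The main obstacle is essentially bookkeeping: keeping the tilded rescalings (the factors of $1/k$) consistent throughout, the sign and factor of $2$ in $-2y$ straight, and tracking the index shift by one produced by the $u$ prefactor in the Proposition. These are routine once the Schur polynomial machinery is in place; there is no new analytic content beyond the Proposition, since the theorem is a reformulation of the residue identity in the Hirota bilinear language familiar from the classical soliton literature. The converse direction is automatic from the if-and-only-if already contained in the Proposition, since the Hirota equations for all free $y$ are equivalent to the vanishing of that single coefficient.
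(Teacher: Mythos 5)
The paper does not actually prove this theorem: it is listed explicitly among the ``propositions and theorems in Kac's book \cite{K} without proofs,'' so there is no in-paper argument to compare against. Judged on its own, your derivation is the standard one and is correct: expanding $\exp(-\sum_{j\geq 1}2u^jy_j)=\sum_{a\geq 0}S_a(-2y)u^a$ and $\exp(\sum_{j\geq 1}\tfrac{u^{-j}}{j}\partial_{y_j})=\sum_{b\geq 0}S_b(\tilde{\partial}_y)u^{-b}$, the prefactor $u$ forces $b=a+1$ in the coefficient of $u^0$, giving $\sum_{a\geq 0}S_a(-2y)S_{a+1}(\tilde{\partial}_y)\,\tau(x-y)\tau(x+y)$, and the identification $\tau(x+y)\tau(x-y)=\exp(\sum_k y_kD_{x_k})\tau\cdot\tau$ converts $\partial_{y_j}$ into the Hirota operator $D_j$, yielding exactly the stated bilinear system (the theorem's $\tilde{x}$ and the $x_k$ in the exponential are evidently the Hirota operators $\tilde{D}$ and $D_k$; the paper's notation is garbled on this point, as is its ``$\mu^0$'' for ``$u^0$'' in the preceding Proposition). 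Two small points worth making explicit in a final write-up: the multiplication operator $S_a(-2y)$ stands to the left of the differential operator, so the $y$-derivatives act only on $\tau(x-y)\tau(x+y)$ and not on the Schur factors, which your coefficient extraction implicitly and correctly respects; and the equivalence of ``vanishing of the $u^0$ coefficient identically in $x,y$'' with ``the Hirota equations hold for all free parameters $y$'' is what makes the if-and-only-if transfer, as you note. No gap.
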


Then it becomes the Kadomtzev-Petviashvili(KP) equation:
\begin{align}
\frac{3}{4}\frac{\partial^2}{\partial y^2}=\frac{\partial}{\partial x}(\frac{\partial u}{\partial t}-\frac{3}{2}u \frac{u}{\partial x}-\frac{1}{4}\frac{\partial^3 u}{\partial x^3}).
\end{align}

\begin{corollary}
The functions $2\frac{\partial^2}{\partial x^2}(\log S_\lambda(x,y,t,c_4,c_5,\cdots))$, where $c_4,c_5,\cdots$ are arbitrary constants, are the solutions of  the $KP$ equation.
\end{corollary}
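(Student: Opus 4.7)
The plan is to chain the two preceding results: the Schur polynomials $S_\lambda(x)$ lie in the orbit $\Omega$, and by Kashiwara--Miwa every element of $\Omega$ satisfies an infinite family of Hirota bilinear equations. One distinguished member of that family, after the substitution $u = 2\partial_x^2 \log \tau$, is exactly the KP equation displayed in the paper; so the corollary becomes a transcription of that implication specialized to $\tau = S_\lambda$.

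First I would set $\tau(x,y,t) := S_\lambda(x,y,t,c_4,c_5,\ldots)$ and observe that freezing the higher times $x_4,x_5,\ldots$ to the constants $c_k$ is harmless: the Kashiwara--Miwa system is polynomial in each time variable and the frozen variables enter only through $\tau$ itself, so $\tau\in\Omega$ is preserved and the full bilinear system remains an identity in $(x,y,t)$. Next I would expand the generating relation
\begin{align}
\sum_{j\geq 0} S_j(-2y)\,S_{j+1}(\tilde{x})\,\exp\!\Big(\textstyle\sum_{k\geq 1} y_k x_k\Big)\,\tau(x)\cdot\tau(x) = 0\nonumber
\end{align}
in the auxiliary parameters $y=(y_1,y_2,\ldots)$ and isolate the coefficient that couples only $x_1,x_2,x_3$; the customary choice (the coefficient of $y_1^3$, equivalently the term coupling $y_1$ and $y_3$) produces the classical Hirota form of KP,
\begin{align}
(D_1^4 - 4 D_1 D_3 + 3 D_2^2)\,\tau\cdot\tau = 0,\nonumber
\end{align}
where $D_i$ denotes the Hirota derivative in $x_i$.

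Finally, with the identifications $x=x_1$, $y=x_2$, $t=x_3$ and $u=2\partial_x^2\log\tau$, I would invoke the standard algebraic identities that express $D_1^m D_j^n(\tau\cdot\tau)/(2\tau^2)$ in terms of derivatives of $\log\tau$, to rewrite the bilinear equation as
\begin{align}
\partial_x\!\left(\partial_t u - \tfrac{3}{2}u\,\partial_x u - \tfrac{1}{4}\partial_x^3 u\right) = \tfrac{3}{4}\partial_y^2 u,\nonumber
\end{align}
which is the KP equation stated in the excerpt. The main obstacle is the middle step: one has to expand $S_j(-2y)$ and the Schur-type differential operator $S_{j+1}(\tilde{x})$ up to the correct total degree in $y$, keep track of the exponential prefactor, and check that the surviving differential operator on $\tau\cdot\tau$ is exactly $D_1^4 - 4 D_1 D_3 + 3 D_2^2$. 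Once that specific Hirota equation is identified, the passage from the bilinear form to the KP equation via $u=2\partial_x^2\log\tau$ is a routine (if bookkeeping-heavy) computation, and the arbitrary constants $c_4,c_5,\ldots$ simply index a family of KP solutions parametrized by the higher KP times.
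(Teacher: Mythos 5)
Your proposal is correct and follows exactly the route the paper intends: the paper states this corollary with no written proof, presenting it as the immediate consequence of the proposition that $S_\lambda\in\Omega$, the Kashiwara--Miwa bilinear characterization of $\Omega$, and the displayed reduction of the lowest Hirota equation to the KP equation via $u=2\partial_x^2\log\tau$. You have simply filled in the bookkeeping (identifying $(D_1^4-4D_1D_3+3D_2^2)\,\tau\cdot\tau=0$ as the relevant coefficient and performing the substitution) that the paper leaves implicit.
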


\subsection{highest weight condition}
The $\tau$-function of the KdV hierarchy  is annihilated by a sequence of
differential operators, which form a half branch of the Virasoro algebra.
(\cite{D},\cite{F}, and \cite{Kac1}).  For the partition function here
According to \cite{K}, the Virasoro algebra with central charge $c_\beta$ can be represented by
\begin{align}
L_i & =\hat{r}(d_i) \ \text{if} \ i\neq 0,\nonumber\\
L_0 & =\hat{r}(d_0)+h_0,
\end{align}
where $c_\beta=-12\beta^2+12\beta-2$, $h_m=\frac{1}{2}(\alpha-m)(\alpha+2\beta-1-m)$.

Now we can compute the highest weight condition: under what condition, a function $Z\in \mathbb{C}[x_1,x_2,\cdots,]$ can be annihilated by $L_n$, $n\geq -1$.
In fact, when $i\neq 0$,
\begin{align}
L_i=\hat{r}(d_i)&=\hat{r}(\sum_{k\in\mathbb{Z}}(k-\alpha-\beta(i+1))E_{k-i,k})\nonumber\\
               &=\sum_{k\in\mathbb{Z}}(k-\alpha-\beta(i+1))\hat{r}(E_{k-i,k})\nonumber\\
               &=\sum_{k\in\mathbb{Z}}(k-\alpha-\beta(i+1))r(E_{k-i,k}),
\end{align}
and when $i=0$,
\begin{align}
L_0 &=\hat{r}(d_0)+h_0\nonumber\\
    &=\hat{r}(\sum_{k\in\mathbb{Z}}(k-\alpha-\beta)E_{k,k})+h_0\nonumber\\
    &=\sum_{k\in\mathbb{Z}}(k-\alpha-\beta)\hat{r}(E_{k,k})+h_0\nonumber\\
    &=\sum_{k>0}(k-\alpha-\beta)(r(E_{k,k})-I)+\sum_{k\leq 0}(k-\alpha-\beta)r(E_{k,k})+h_0\nonumber\\
\end{align}

Moreover, since the transported representation $\hat{r}^B_m=\sigma_m\hat{r}_m \sigma_{m}^{-1}$ of $\mathcal{A}$ on $B^{(m)}$ is
\begin{align}
\hat{r}^B_m(\Lambda_k)&=\frac{\partial}{\partial x_k}, \nonumber\\
\hat{r}^B_m (\Lambda_{-k})& =k x_k,\nonumber\\
\hat{r}^B_m (\Lambda_0)&=m,
\end{align}
which is a representation of Heisenberg algebra on $B^{(m)}$. Then by a result of Fairlie \cite{CT}, there is a oscillator representation of Virasoro algebra for arbitrary $\lambda$, $\mu$
\begin{align}
L_0 &=(\mu^2+\lambda^2)/2+\sum_{j>0} a_{-j}a_j,\nonumber\\
L_k &=\frac{1}{2}\sum_{j\in\mathbb{Z}}a_{-j}a_{j+k}+i\lambda ka_k,
\end{align}
for $k\neq 0$(take $\hbar =1$ and $a_0=\mu$).
It is easy to verify that the central charge for this Virasoro algebra is $1+12\lambda^2$.

Then in this case the representation of the Virasoro algebra on $B^{(m)}$  is
\begin{align}
L_0 &=(\mu^2+\lambda^2)/2+\sum_{j>0}jx_j \frac{\partial}{\partial x_j},\nonumber\\
L_k &=\frac{1}{2}\sum_{j\in\mathbb{Z}}jx_j\frac{\partial}{\partial x_{j+k}}+i\lambda k\frac{\partial}{\partial x_k}, \ k\geq 0,\nonumber\\
L_k &=\frac{1}{2}\sum_{j\in\mathbb{Z}}jx_j\frac{\partial}{\partial x_{j+k}}+i\lambda k^2 x_k, \ k\leq 0.
\end{align}

\subsection{the difficulty of a conjecture of Kontsevich}
Kontsevich also proposed some conjectures in \cite{K}. Let us see the some  of them that are concerned with the KdV hierarchies.
First of all, one can introduce variables $s$ :
\begin{align}
Z(t_0,t_1,\cdots,;s_0,s_1,\cdots)=\exp(\sum_{n_* m_*} \langle \tau_{d_1}\cdots\tau_{d_n}\rangle_{m_0,m_1,\cdots} \prod_{i=0}^{\infty} \frac{t_i^{n_i}}{n_i!}\prod_{j=0}^\infty s_j^{m_j}).
\end{align}
It can be shown (\cite{K}) that  that $Z(t_*(¦«), s_*)$  is an asymptotic
expansion of
\begin{align}
I_N(\Lambda)=\int \exp(\sqrt{-1}\sum_{j=0}^{\infty} (-1/2)^j s_j \frac{\text{tr} \ M^{2j+1} }{2j+1})d\mu_\Lambda(M).
\end{align}
Then we can list the statements of these conjectures are\\
\emph{
  1. $Z(t,s)$ is a $\tau$-function for KdV-hierarchy in variables $T_{2i+1}:=\frac{t_i}{(2i+1)!!}$ for arbitrary $s$.\\
  2. $Z(t,s)$ is a $\tau$-function for KdV-hierarchy in variables $T_{2i+1}:=\frac{s_i}{(2i+1)!!}$ for arbitrary $t$.\\
  3. Let $T$ be any formal $\tau$-function for the KdV-hierarchy considered as a matrix function. Then $\int T(X)d\mu_{\Lambda}(X)$ is a matrix $\tau$-function for the KdV-hierarchy in $\Lambda$.}

We shall explain the difficulty of the first conjecture in this subsection.
We recall the Harish-Chandra formula \cite{H}.
\begin{lemma}\label{4.1}
 If $\Phi$ is a conjugacy invariant function on the space of hermitian $N\times N$-matrices, then for any diagonal hermitian matrix $Y$,
\begin{align}
\int \Phi(X) e^{-\sqrt{-1} \text{tr} XY}dX=(-2\pi \sqrt{-1})^{N(N-1)/2} (V(Y))^{-1}\int \Phi(D) e^{-\sqrt{-1}\text{tr} DY}V(D)dD,
\end{align}
where the last integral is taken over the space of diagonal hermitian matrices $D$;$V$ is the Vandermonde Polynomial determinant which is defined by
\begin{align}
V(\text{diag}(X_1,X_2,\cdots,X_n):= \prod_{i<j} (X_j-X_i)=\det (X_i^{j-1}).
\end{align}
\end{lemma}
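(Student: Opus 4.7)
The plan is to reduce the Hermitian matrix integral on the left to a diagonal one via the spectral decomposition $X=UDU^{*}$, apply the Harish-Chandra--Itzykson--Zuber (HCIZ) formula to the resulting orbital integral over $U(N)$, and then use the symmetry of $\Phi$ to collapse the determinantal numerator into a single symmetric integrand. Using the Weyl integration formula for Hermitian matrices, $dX = c_{N}\,V(D)^{2}\,dD\,dU$ (with $dU$ the normalized Haar measure on $U(N)/T^{N}$, $D$ ranging over $\mathbb{R}^{N}$, and the $1/N!$ absorbed into the combinatorial constant $c_{N}$), together with conjugation invariance $\Phi(UDU^{*})=\Phi(D)$, the left-hand side becomes
\begin{align*}
c_{N}\int \Phi(D)\,V(D)^{2}\Big[\int_{U(N)} e^{-\sqrt{-1}\,\text{tr}(UDU^{*}Y)}\,dU\Big]\,dD.
\end{align*}

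The crucial step is to evaluate the inner orbital integral by the HCIZ identity,
\begin{align*}
\int_{U(N)} e^{-\sqrt{-1}\,\text{tr}(UDU^{*}Y)}\,dU = \Big(\prod_{k=1}^{N-1} k!\Big)\,\frac{\det\bigl[e^{-\sqrt{-1}\,d_{j}y_{k}}\bigr]_{j,k=1}^{N}}{(-\sqrt{-1})^{N(N-1)/2}\,V(D)\,V(Y)}.
\end{align*}
This is the main obstacle in the argument. I would establish it either by the Duistermaat--Heckman exact stationary phase theorem applied to the coadjoint orbit through $D$---whose $T^{N}$-fixed points are precisely the $N!$ permutations of the diagonal and whose localized contributions assemble into the determinantal numerator---or, more elementarily, by verifying that both sides satisfy the same radial Calogero-type PDE in $(d_{j})$ with matching boundary behaviour on the Weyl chamber walls, yielding a uniqueness argument. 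The formula first requires $Y$ to have distinct eigenvalues; the remaining diagonal cases follow by continuity.

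Substituting HCIZ yields an integrand proportional to $\Phi(D)\,V(D)\,\det\bigl[e^{-\sqrt{-1}\,d_{j}y_{k}}\bigr]$ times the constant $c_{N}\prod_{k=1}^{N-1}k!\big/\bigl[(-\sqrt{-1})^{N(N-1)/2}V(Y)\bigr]$. Expanding the determinant as $\sum_{\sigma\in S_{N}}\operatorname{sgn}(\sigma)\prod_{j} e^{-\sqrt{-1}\,d_{j}y_{\sigma(j)}}$ and applying the change of variables $d_{j}\mapsto d_{\sigma(j)}$ in the $\sigma$-summand exploits both $\Phi(D_{\sigma})=\Phi(D)$ and $V(D_{\sigma})=\operatorname{sgn}(\sigma)V(D)$: the two sign factors cancel, all $N!$ summands reduce to the same integral, and one is left with $N!\int \Phi(D)\,V(D)\,e^{-\sqrt{-1}\,\text{tr}(DY)}\,dD$. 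Finally, I would check that the accumulated prefactor $c_{N}\cdot N!\cdot\prod_{k=1}^{N-1}k!\cdot(-\sqrt{-1})^{-N(N-1)/2}$ equals $(-2\pi\sqrt{-1})^{N(N-1)/2}$; this follows from the standard normalization $\operatorname{vol}(U(N)/T^{N}) = (2\pi)^{N(N-1)/2}/\prod_{k=1}^{N}k!$, so that the factorials cancel and only the advertised power of $-2\pi\sqrt{-1}$ survives, producing the claimed formula.
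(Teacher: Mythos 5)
The paper never proves this lemma: it is ``recalled'' as the Harish-Chandra formula with a citation, and the only argument in that section is the proof of Lemma \ref{4.3}, which takes the orbital-integral form of this result as input. So there is no in-paper proof to compare yours against, and I can only judge the proposal on its own terms. Your outline --- diagonalize $X=UDU^{*}$, apply the Weyl integration formula $dX=c_{N}V(D)^{2}\,dD\,dU$, evaluate the inner $U(N)$ integral by the Harish-Chandra/Itzykson--Zuber identity, expand $\det[e^{-\sqrt{-1}d_{j}y_{k}}]$ over $S_{N}$, and use $\Phi(D_{\sigma})=\Phi(D)$ together with $V(D_{\sigma})=\operatorname{sgn}(\sigma)V(D)$ to collapse the sum into $N!$ copies of $\int\Phi(D)V(D)e^{-\sqrt{-1}\operatorname{tr}(DY)}dD$ --- is the standard route, and the antisymmetrization step is carried out correctly. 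But be aware that the HCIZ evaluation is not a reduction to something simpler: Lemma \ref{4.1} \emph{is} HCIZ integrated against radial test functions, so all of the substance of the lemma sits in the step you only gesture at (Duistermaat--Heckman localization, or the radial-PDE/uniqueness argument). One of those would have to be written out in full for this to count as a proof rather than a restatement.

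The second gap is the constant, which is exactly where such derivations usually fail and which you defer to a final ``check'' that does not obviously close. With the standard Lebesgue measure $\prod_{i}dX_{ii}\prod_{i<j}d(\operatorname{Re}X_{ij})\,d(\operatorname{Im}X_{ij})$ one has $c_{N}=\pi^{N(N-1)/2}/\prod_{j=1}^{N}j!$, and combining this with the HCIZ constant $\prod_{k=1}^{N-1}k!$ and the factor $N!$ from the symmetrization yields an overall prefactor $(\pi\sqrt{-1})^{N(N-1)/2}\,V(Y)^{-1}$, not $(-2\pi\sqrt{-1})^{N(N-1)/2}\,V(Y)^{-1}$. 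Recovering the stated constant requires the convention $dX_{ij}\,d\bar{X}_{ij}=2\,d(\operatorname{Re}X_{ij})\,d(\operatorname{Im}X_{ij})$ on the off-diagonal entries plus a sign bookkeeping that neither the lemma statement nor your argument specifies; also, your quoted normalization of $\operatorname{vol}(U(N)/T^{N})$ has $\prod_{k=1}^{N}k!$ where the flag-manifold volume carries $\prod_{k=1}^{N-1}k!$. Since the lemma is only used up to an unspecified ``const'' in the proof of Lemma \ref{4.3}, none of this harms the paper, but as a proof of the displayed identity your argument is incomplete at precisely the point where the identity claims more than proportionality.
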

Harish-Chandra generalized the above fact: \cite{Harish}:

Let $G$ be a compact simple Lie group, $L$ its Lie algebra of order $N$ and rank $n$, $W$ the Weyl group of $L$,$R_{+}$ the set of positive roots, and $m_i=d_i-1$ its Coxerter indexes. Also $X$ and $Y$ elements of $L$. Let $(X,Y)$ be a bilinear form which is invariant under $G$,i.e.,$(gX,gY)=(X,Y)$,for $\forall g\in G$. Then
\begin{align}
\int_{g\in G} \exp(c(X,gYg^{-1}) dg=\text{const} \sum_{w\in W} \epsilon_w \exp(c(X,wY)/\prod_{\alpha\in R_{+}} (\alpha,X)(\alpha,Y).
\end{align}
 We have the following
\begin{lemma}\label{4.3}
\begin{align}
\int \Phi(X) e^{- \frac{1}{2}\text{tr} \Lambda X^2}dX=\text{const}\int \Phi(D)V(D) \frac{\sum_{w\in S_N} \text{sign}(w) e^{-\frac{1}{2}\text{tr}(\Lambda  w(D^2))}}{ V(\Lambda)\prod_{i<j}(D_i+D_j)} dD.
\end{align}
\end{lemma}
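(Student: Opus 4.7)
The plan is to reduce the left-hand integral over Hermitian $N\times N$ matrices to an integral over the diagonal Cartan subalgebra, and then to evaluate the resulting $U(N)$-integration by direct application of the generalized Harish-Chandra formula recalled just before the statement. Because the Gaussian weight $e^{-\tfrac12\operatorname{tr}\Lambda X^2}$ contains the fixed matrix $\Lambda$, the full integrand is not conjugacy-invariant in $X$, so Lemma~\ref{4.1} does not apply directly; however, conjugating $X\mapsto UXU^{-1}$ sends $X^2$ to $UX^2U^{-1}$, which exposes exactly the structure (with $X\leftarrow \Lambda$ and $Y\leftarrow D^2$) on which Harish-Chandra operates.

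First I would diagonalize $X=UDU^{-1}$ with $U\in U(N)$ and $D=\operatorname{diag}(D_1,\dots,D_N)$. The Weyl integration formula for Hermitian matrices gives, up to an overall constant which absorbs the $N!$-fold eigenvalue overcounting,
\begin{equation*}
dX \;=\; \text{const}\cdot V(D)^2\, dD\, dU,
\end{equation*}
where $dU$ is normalized Haar measure on $U(N)$. Using $\Phi(UDU^{-1})=\Phi(D)$ and $\operatorname{tr}(\Lambda X^2)=\operatorname{tr}(\Lambda U D^2 U^{-1})$, the left-hand side becomes
\begin{equation*}
\text{const}\int \Phi(D)\, V(D)^2 \left(\int_{U(N)} e^{-\tfrac12\operatorname{tr}(\Lambda U D^2 U^{-1})}\, dU\right) dD.
\end{equation*}

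Next I would apply the generalized Harish-Chandra identity to the inner integral, with invariant bilinear form $(A,B)=\operatorname{tr}(AB)$, Weyl group $W=S_N$, and positive roots $(\alpha_{ij},\operatorname{diag}(x_1,\dots,x_N))=x_j-x_i$ for $i<j$, so that $\prod_{\alpha\in R_+}(\alpha,\Lambda)=V(\Lambda)$ and $\prod_{\alpha\in R_+}(\alpha,D^2)=V(D^2)$. Taking $c=-\tfrac12$, the formula yields
\begin{equation*}
\int_{U(N)} e^{-\tfrac12\operatorname{tr}(\Lambda U D^2 U^{-1})}\, dU \;=\; \text{const}\cdot \frac{\sum_{w\in S_N}\operatorname{sign}(w)\, e^{-\tfrac12\operatorname{tr}(\Lambda w(D^2))}}{V(\Lambda)\, V(D^2)}.
\end{equation*}

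Finally, the Vandermonde bookkeeping is elementary: $V(D^2)=\prod_{i<j}(D_j^2-D_i^2)=V(D)\prod_{i<j}(D_i+D_j)$, hence $V(D)^2/V(D^2)=V(D)/\prod_{i<j}(D_i+D_j)$. Substituting into the previous display reproduces exactly the right-hand side of Lemma~\ref{4.3}. The only point meriting a sanity check, rather than any genuine obstacle, is that the resulting integrand is $S_N$-symmetric in $D$ so that the $S_N$-quotient implicit in the Weyl integration formula is compatible with integrating $D$ over all of $\mathbb{R}^N$: $V(D)$ is antisymmetric, the Harish-Chandra alternating sum is also antisymmetric in the $D_i$ (since $w$ permutes the diagonal entries $D_i^2$), their product is symmetric, and $\prod_{i<j}(D_i+D_j)$ is already symmetric. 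All remaining numerical factors are absorbed into the single $\text{const}$.
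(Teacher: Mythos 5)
Your proposal is correct and follows essentially the same route as the paper: diagonalize $X=UDU^{-1}$ via the Weyl integration formula to pick up the factor $V(D)^2$, apply the generalized Harish-Chandra formula to the inner integral $\int_{U(N)}e^{-\frac12\operatorname{tr}(\Lambda UD^2U^{-1})}\,dU$ with the root products giving $V(\Lambda)$ and $V(D^2)$, and then simplify $V(D)^2/V(D^2)=V(D)/\prod_{i<j}(D_i+D_j)$. Your closing remark on the $S_N$-symmetry of the resulting integrand is a small addition the paper leaves implicit, but the argument is the same.
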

\begin{proof}
We apply Harish-Chandra's result to the unitary group $\rm U(N)$, then
\begin{align}
& \int \Phi(X) e^{- \frac{1}{2}\text{tr} \Lambda X^2}dX \nonumber\\
=& \text{const}\int \Phi(D) (\int e^{- \frac{1}{2}\text{tr} \Lambda UD^2 U^{-1}}dU)V^2(D) dD\nonumber\\
=& \text{const}\int \Phi(D) \nonumber\\
& \cdot(\sum_{w\in S_N} \text{sign}(w) e^{-\frac{1}{2}\text{tr}(\Lambda U^{-1} D^2 U)}/\prod_{1\leq i<j\leq N}\text{tr}((\epsilon_j-\epsilon_i)\Lambda)\text{tr}((\epsilon_j-\epsilon_i)D^2)V^2 (D)dD\nonumber\\
=& \text{const}\int \Phi(D) \nonumber\\
& \cdot(\sum_{w\in S_N} \text{sign}(w) e^{-\frac{1}{2}\text{tr}(\Lambda U^{-1} D^2 U)}/\prod_{1\leq i<j\leq N}\text{tr}((\epsilon_j-\epsilon_i)\Lambda)\text{tr}((\epsilon_j-\epsilon_i)D^2)V^2(D)dD\nonumber\\
=& \text{const}\int \Phi(D)V(D) \frac{\sum_{w\in S_N} \text{sign}(w) e^{-\frac{1}{2}\text{tr}(\Lambda  w(D^2))}}{ V(\Lambda)\prod_{i<j}(D_i+D_j)} dD\nonumber\\
\end{align}
\end{proof}

We need the following
\begin{lemma}\label{field}
$I_N(\Lambda)$ is symmetric with respect to $\Lambda$ and $I_N(\Lambda)$ is in the field $\mathbb{C}(\Lambda)$, which is the field of polynomial ring of $\Lambda$.
\end{lemma}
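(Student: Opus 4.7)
The plan is to prove the two assertions — symmetry and rationality — separately, relying on Lemma~\ref{4.3} for the second and on a direct change of variables for the first.

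\textbf{Symmetry.} First I would check that $I_N(\Lambda)$ is invariant under any permutation $\sigma \in S_N$ of the diagonal entries of $\Lambda$. Let $P$ be the permutation matrix of $\sigma$ and substitute $M \mapsto P^{-1}MP$ both in the cubic integral \eqref{2} and in its Gaussian normalization $d\mu_{(\Lambda)}$. The Lebesgue measure $dM$ is preserved since $|\det P|=1$, and $\text{tr}(M^3)$ is preserved because it is a class function. The quadratic form transforms as
\begin{align}
\text{tr}\bigl((P^{-1}MP)^{2}\Lambda\bigr)=\text{tr}\bigl(P^{-1}M^{2}P\Lambda\bigr)=\text{tr}\bigl(M^{2}P\Lambda P^{-1}\bigr),\nonumber
\end{align}
so the substitution identifies $I_N(\Lambda)$ with $I_N(P\Lambda P^{-1})$. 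Since any rearrangement of the eigenvalues of $\Lambda$ is realized by a suitable $P$, this shows $I_N$ is symmetric in the eigenvalues.

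\textbf{Rationality.} For the second assertion I would apply Lemma~\ref{4.3} simultaneously to the numerator (with $\Phi(X)=\exp(\tfrac{\sqrt{-1}}{6}\text{tr}\,X^{3})$) and to the denominator (with $\Phi\equiv 1$). In both cases the $\Lambda$-dependence factors through $V(\Lambda)$ and a sum over $w\in S_{N}$ of exponentials $e^{-\frac{1}{2}\sum_{i}\lambda_{w(i)}D_{i}^{2}}$, and the remaining integration is over diagonal matrices $D=\operatorname{diag}(D_{1},\dots,D_{N})$. Interpreting $I_{N}(\Lambda)$ order by order as its formal asymptotic expansion, each coefficient becomes a finite Gaussian moment in the $D_{i}$, whose $\Lambda$-dependence is built from $1/V(\Lambda)$, polynomial factors in the $\lambda_{w(i)}$, and the factors $1/(\lambda_{i}+\lambda_{j})$ arising from the Gaussian propagator. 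All such expressions manifestly lie in $\mathbb{C}(\Lambda)$, and dividing by the normalization cancels any part independent of the cubic insertion. Combined with the symmetry established above, each coefficient is a symmetric rational function of the $\lambda_{i}$, so $I_{N}(\Lambda)\in\mathbb{C}(\Lambda)$.

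\textbf{Main obstacle.} The delicate point is not any single algebraic identity but the careful treatment of $I_{N}(\Lambda)$ as a formal asymptotic expansion: one must verify that each term of this expansion is genuinely rational (rather than merely algebraic, which would be consistent with the apparent $\lambda^{-1/2}$ powers produced by the rescaling $D_{i}\mapsto \tilde D_{i}/\sqrt{\lambda_{w(i)}}$). The half-integer powers must cancel, and the cleanest way to see this — which I would adopt as a fallback if the HCIZ bookkeeping becomes unwieldy — is to bypass the diagonalization entirely: expand $\exp(\tfrac{\sqrt{-1}}{6}\text{tr}\,M^{3})$ as a formal series and compute the resulting Gaussian moments by Wick's theorem with the two-point function $\langle M_{ij}M_{kl}\rangle=2\delta_{il}\delta_{jk}/(\lambda_{i}+\lambda_{j})$. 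Every Feynman diagram then contributes a product of manifestly rational propagators, which together with Part~1 yields the lemma.
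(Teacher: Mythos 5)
Your proposal is correct, but it reaches the conclusion by a genuinely different route than the paper. For the symmetry claim you give an explicit argument (conjugating $M\mapsto P^{-1}MP$ by a permutation matrix, which preserves $dM$ and $\operatorname{tr}(M^{3})$ and sends $\Lambda$ to $P\Lambda P^{-1}$); the paper never isolates this step, letting symmetry fall out of the manifestly symmetric closed-form expressions. For rationality the paper works through Lemma~\ref{4.3}: it takes the explicit Gaussian normalization \eqref{4.8}, rewrites it via the Harish-Chandra reduction, applies $\sum_{i}\partial/\partial\lambda_{i}$ and a compensating prefactor to exhibit the first nontrivial moment as $-\sum_{r}\lambda_{r}^{-1}-\sum_{r<k}2/(\lambda_{r}+\lambda_{k})$, and then asserts that all higher symmetric-polynomial insertions behave ``similarly.'' Your fallback argument --- expanding $\exp(\tfrac{\sqrt{-1}}{6}\operatorname{tr}M^{3})$ and Wick-contracting with the propagator $\langle M_{ij}M_{kl}\rangle=2\delta_{il}\delta_{jk}/(\lambda_{i}+\lambda_{j})$ --- is the more robust route: it treats every order of the asymptotic expansion of \eqref{2} uniformly, each coefficient being a finite sum of products of manifestly rational, symmetric propagator factors, and it also makes transparent why the answer lies in $\mathbb{C}(\Lambda)$ but not in $\mathbb{C}[\Lambda]$, which is the point the paper wants for the ensuing discussion of Kontsevich's first conjecture. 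What the paper's route buys instead is that the Harish-Chandra machinery it invokes here is the same machinery it needs immediately afterwards. One caution on your first (HCIZ-based) route: you are right that the rescaling $D_{i}\mapsto\tilde D_{i}/\sqrt{\lambda_{w(i)}}$ produces half-integer powers whose cancellation against the normalization must be checked, and you correctly flag this rather than glossing over it; since your Wick-theorem argument avoids the issue entirely, I would lead with it rather than relegate it to a fallback.
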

\begin{proof}
By direct computation,
\begin{align}
\int \exp(-\frac{1}{2}\text{tr} \Lambda M^2)dM=2^{\frac{N(N-1)}{2}}(2\pi)^{\frac{N^2}{2}} \prod_{r=1}^N \lambda_r^{-\frac{1}{2}} \prod_{i<j}(\lambda_i+\lambda_j)^{-1}. \label{4.8}
\end{align}
By Lemma (\ref{4.3}), (\ref{4.8}) can be written as
\begin{align}
\text{const}\int V(D) \frac{\sum_{w\in S_N} \text{sign}(w) e^{-\frac{1}{2}\text{tr}(\Lambda  w(D^2))}}{ V(\Lambda)\prod_{i<j}(D_i+D_j)} dD=2^{\frac{N(N-1)}{2}}(2\pi)^{\frac{N^2}{2}} \prod_{r=1}^N \lambda_r^{-\frac{1}{2}} \prod_{i<j}(\lambda_i+\lambda_j)^{-1}\nonumber,
\end{align}
or
\begin{align}
\text{const}\int V(D) \frac{\sum_{w\in S_N} \text{sign}(w) e^{-\frac{1}{2}\text{tr}(\Lambda  w(D^2))}}{ \prod_{i<j}(D_i+D_j)} dD=2^{\frac{N(N-1)}{2}}(2\pi)^{\frac{N^2}{2}}V(\Lambda) \prod_{r=1}^N \lambda_r^{-\frac{1}{2}} \prod_{i<j}(\lambda_i+\lambda_j)^{-1}.
\end{align}
Acting on both sides by operator
$\sum_{i=1}^N \frac{\partial}{\partial \lambda_i}$ and multiplying
$$\frac{2^{-\frac{N(N-1)}{2}}(2\pi)^{-\frac{N^2}{2}} \prod_{r=1}^N \lambda_r^{\frac{1}{2}}, \prod_{i<j}(\lambda_i+\lambda_j)}{V(\Lambda)}
$$
we get
\begin{align}
& = \text{const}\frac{2^{-\frac{N(N-1)}{2}}(2\pi)^{-\frac{N^2}{2}} \prod_{r=1}^N \lambda_r^{\frac{1}{2}} \prod_{i<j}(\lambda_i+\lambda_j)}{V(\Lambda)}\nonumber\\
&\cdot\int (\sum_{i=1}^N D_i^2)V(D) \frac{\sum_{w\in S_N} \text{sign}(w) e^{-\frac{1}{2}\text{tr}(\Lambda  w(D^2))}}{\prod_{i<j}(D_i+D_j)} dD \nonumber\\
&= -\sum_{r=1}^N \frac{1}{\lambda_r}-\sum_{r<k} \frac{2}{\lambda_r+\lambda_k}.
\end{align}
\end{proof}
Similarly, we can prove for all symmetric polynomials $P(\Lambda)$,
$\langle P(\Lambda) \rangle\in \mathbb{C}(\Lambda)$, but not in $\mathbb{C}[\Lambda]$. This is the main reason why
the first conjecture is hard to prove, since the $\tau$-function of KP hierarchy is in $\mathbb{C}[\Lambda]$.

\section{Virasoro conjecture and matrix model}
Let us recall the definition of $\bar{\mathcal{M}}_{g,n}(M,\beta)$ and some properties \cite{Mirror}.
\begin{definition}
Let $M$ be a non-singular projective variety. A morphism $f$ from a pointed nodal curve to $X$ is a stable map if every genus 0 contracted component of $\Sigma$ has at least three special points, and every genus 1 contracted component has at least one special point.
\end{definition}

\begin{definition}
A stable map  represents a homology class $\beta\in H_2(M,\mathbb{Z})$ if $f_*(C)=\beta$.
\end{definition}

The moduli space of stable maps from n-pointed genus $g$ nodal curves to $M$ representing the class $\beta$ is denoted $\bar{\mathcal{M}}_{g,n}(M,\beta)$.
The moduli space $\bar{\mathcal{M}}_{g,n}(M,\beta)$ is a Deligne-Mumford stack. It has the following properties:\\
(1)There is an open subset $\bar{\mathcal{M}}_{g,n}(M,\beta)$ corresponding maps from non-singular curves.\\
(2)$\bar{\mathcal{M}}_{g,n}(M,\beta)$ is compact.\\
(3)There are $n$ "evaluation maps" $\operatorname{ev}_i:\bar{\mathcal{M}}_{g,n}(M,\beta)\rightarrow M$ defined by
\begin{align}
\operatorname{ev}_i(\Sigma,p_1,\cdots,p_n,f)=f(p_i), \   1\leq i\leq n.
\end{align}
(4)If $n_1\geq n_2$, there is a "forgetful morphism"
\begin{align}
\bar{M}_{g,n_1}(M,\beta)\rightarrow \bar{M}_{g,n_2}(M,\beta).
\end{align}
so long as the space on the right exists. \\
(5)There is a "universal map" over the moduli space:
\begin{align}
(\tilde{\Sigma},\tilde{p_1},\cdots, \tilde{p_n})\xrightarrow{\bar{f}} M,\nonumber\\
(\tilde{\Sigma},\tilde{p_1},\cdots, \tilde{p_n})\xrightarrow{\pi}  \bar{\mathcal{M}}_{g,n}(M,\beta).
\end{align}

(6)Given  a morphism $g:X\rightarrow Y$, there is an induced morphism
\begin{align}
\bar{\mathcal{M}}_{g,n}(X,\beta)\rightarrow \bar{M}_{g,n_2}(Y,g_*\beta),
\end{align}
so long as the space on the right exists.
(7)Under certain nice circumstances, if $M$ is convex, $\bar{M}_{0,n}(M,\beta)$ is non-singular of dimension
\begin{align}
\int_\beta c_1(T_M)+\operatorname{dim} M+n-3.
\end{align}

\begin{definition}
At each point $[\Sigma, p_1,\cdots,p_n,f]$ of $\bar{\mathcal{M}}_{g,n}(X,\beta)$, the cotangent line to $\sigma$ at point $p_i$ is a one dimensional vector space, which gives a line bundle $\mathbb{L}_i$, called the $i$th tautological line bundle.
\end{definition}
Traditionally, given classes $\gamma_1,\gamma_2,\cdots,\gamma_k\in H^*(M,\mathbb{Q})$,the gravitational descendant invariants are defined by
\begin{align}
& \langle \tau_{n_1}(\gamma_1)\tau_{n_2}(\gamma_2)\cdots\tau_{n_k}(\gamma_k)\rangle:\nonumber\\
 =& \sum_{A\in H_2(M,\mathbb{Z})}q^A \int_{[\bar{M_{g,k}}(M,A)]^{\text{Virt}}} c_1(\mathbb{L}_1)^{n_1}\cup \text{ev}^*_1(\gamma_1)\cup c_1(\mathbb{L}_2)^{n_2}\cup \text{ev}^*_2(\gamma_2)\cdots \nonumber\\
  & c_1(\mathbb{L}_k)^{n_k}\cup \text{ev}^*_1(\gamma_k)\rangle
\end{align}
The free energy $F_g^M$  can be written as
\begin{align}
F_g^M(t):=\langle \exp(\sum_{n,\alpha} t^\alpha_n \tau_n(\alpha))\rangle_g,
\end{align}
where $\mathcal{O}_1,\mathcal{O}_2,\cdots,\mathcal{O}_N$ form a basis of $H^*(M,\mathbb{Q})$; $\alpha$ ranges from $1$ to $N$; $n$ ranges over nonnegative integers; only finite $t^\alpha_n$ are nonzero.

In 1997,T. Eguchi, K. Hori and C. Xiong and S.Katz proposed a conjecture which generalized Witten 1990 conjecture \cite{E}:
Then the partition function is
\begin{align}
Z^M(t):=\exp(\sum_{g\geq 0}\lambda^{2-2g}F^M_g(t)). \label{partition}
\end{align}
The statement of Virasoro conjecture is that $Z^M(t)$ is  annihilated by $L_n$, $n\geq -1$, which forms part of Virasoro algebra with central charge $c=\chi(M)$,i.e., $\{L_n\}$ satisfy
\begin{align}
[L_n,L_m]=(n-m)L_{n+m}+\frac{\delta_{m,-n}}{12}\cdot \chi(M),
\end{align}
for $m,n\in\mathbb{Z}$. Since this conjecture was proposed, there have been lots of efforts on it. It has been confirmed up to genus 2 \cite{Lee} and there have been good results \cite{Liu1}\cite{FP}\cite{Liu2}\cite{Liu3} and etc.

   The representation of $L_n$, $n\geq -1$ is
\begin{align}
L_{-1}&=\sum_{\alpha=0}^N \sum_{m=1}^\infty m t^\alpha_m \partial_{m-1,\alpha}+\frac{1}{2\lambda^2}\sum_{\alpha=0}^N t^\alpha t_\alpha,\nonumber\\
L_0 &=\sum_{\alpha=0}^N \sum_{m=0}^\infty (m+b_\alpha)t^\alpha_m \partial_{m,\alpha}+(N+1)\sum_{\alpha=0}^{N-1}\sum_{m=0}^\infty m t^\alpha_m \partial_{m-1,\alpha+1}\nonumber\\
&+\frac{1}{2\lambda^2} \sum_{\alpha=0}^{N-1}(N-1)t^\alpha t_{\alpha+1}-\frac{1}{48}(N-1)(N+1)(N+3),\nonumber\\
L_n & =\sum_{m=0}^\infty \sum_{\alpha,\beta}\sum_j C^{(j)}_\alpha(m,n)(\mathcal{C}^j)_\alpha^\beta t^\alpha_m \partial_{m+n-j,\beta}\nonumber\\
& +\frac{\lambda^2}{2}\sum_{\alpha,\beta}\sum_j \sum_{m=0} D^{(j)}_\alpha(\mathcal{C})_\alpha^\beta \partial_m^\alpha \partial_{n-m-j-1,\beta}+\frac{1}{2\lambda^2}\sum_{\alpha,\beta}(\mathcal{C}^{n+1})_\alpha^\beta t^\alpha t_\beta,
\label{48}
\end{align}

\begin{align}
b_\alpha=q_\alpha-\frac{\operatorname{dim}M-1}{2},
\end{align}
where
\begin{align}
\mathcal{C}_\alpha^\beta=\int_M c_1(M)\wedge \omega_\alpha\wedge\omega^\beta,
\end{align}
and $\mathcal{C}^j$ is the $j$-th power of the matrix $\mathcal{C}$;
\begin{align}
C^{(j)}_\alpha(m,n)=& \frac{(b_\alpha+m)(b_\alpha+m+1)\cdots (b_\alpha+m+n)}{(m+1)(m+2)\cdots (m+n)}\nonumber\\
& \sum_{m\leq l_1<l_2<\cdots<l_j\leq m+n}\prod_j (\frac{1}{b_\alpha+l_j});
\end{align}
and
\begin{align}
D^{j}_\alpha(m,n)=&\frac{b^\alpha (b^\alpha+1)\cdots (b^\alpha+m)b_\alpha(b_\alpha+1)\cdots (b_\alpha+n-m-1)}{m! (n-m-1)!}\nonumber\\
& \sum_{-m\leq l_1<l_2<\cdots<l_j\leq n-m-1}\prod_j (\frac{1}{b_\alpha+l_j}
\end{align}

The operators  (\ref{48}) form a Virasoro algebra with a central charge $c=\sum_\alpha 1=\chi(M)$, if the following condition is satisfied
\begin{align}
\frac{1}{4}\sum_\alpha b^\alpha b_\alpha=\frac{1}{24}(\frac{3-\operatorname{M}}{2}\chi(M)-\int_M c_1(M)\wedge c_{\operatorname{M}M-1}(M)).
\end{align}
It was found that the above definition really forms a Virasoro algebra \cite{E}. It is instructive to verify it really forms an algebra here:

\begin{align}
& [L_{n_1},L_{n}]\nonumber\\
&=[\sum_{m_1=0}^\infty \sum_{\alpha_1,\beta_1}\sum_{j_1} C^{(j_1)}_{\alpha_1}(m_1,n_1)(\mathcal{C}^{j_1})_{\alpha_1}^{\beta_1} t^{\alpha_1}_{m_1} \partial_{m_1+n_1-j_1,\beta_1}\nonumber\\
& +\frac{\lambda^2}{2}\sum_{\alpha_1,\beta_1}\sum_{j_1} \sum_{m_1=0} D^{(j_1)}_{\alpha_1}(m_1,n_1)(\mathcal{C}^{j_1})_{\alpha_1}^{\beta_1} \partial_{m_1}^\alpha \partial_{n_1-m_1-j_1-1,\beta}\nonumber\\
&+\frac{1}{2\lambda^2}\sum_{\alpha_1,\beta_1}(\mathcal{C}^{n_1+1})_{\alpha_1}^{\beta_1} t^{\alpha_1} t_{\beta_1},
\sum_{m=0}^\infty \sum_{\alpha,\beta}\sum_j C^{(j)}_\alpha(m,n)(\mathcal{C}^j)_{\alpha}^{\beta} t^{\alpha}_{m} \partial_{m+n-j,\beta}\nonumber\\
& +\frac{\lambda^2}{2}\sum_{\alpha,\beta}\sum_j \sum_{m=0}^\infty D^{(j)}_\alpha(m,n)(\mathcal{C}^{j})_\alpha^\beta \partial_m^\alpha \partial_{n-m-j-1,\beta}\nonumber\\
&+\frac{1}{2\lambda^2}\sum_{\alpha,\beta}(\mathcal{C}^{n+1})_\alpha^\beta t^\alpha t_\beta]\nonumber\\
&=\sum_{m_1=0}^\infty \sum_{\alpha_1,\beta_1}\sum_{j_1} \sum_{m=0}^\infty \sum_{\alpha,\beta}\sum_j C^{(j_1)}_{\alpha_1}(m_1,n_1)\nonumber\\
& (\mathcal{C}^{j_1})_{\alpha_1}^{\beta_1}
C^{(j)}_\alpha(m,n)(\mathcal{C}^j)_{\alpha}^{\beta}(\delta_{\alpha,\beta_1}\delta_{m_1+n_1-j_1,m}
t^{\alpha_1}_{m_1}\partial_{m+n-j,\beta}-\delta_{m+n-j,m_1}\delta_{\beta,\alpha_1}t^\alpha_m \partial_{m_1+n_1-j_1,\beta_1} )\nonumber\\
&-\sum_{m_1=0}^\infty \sum_{\alpha_1,\beta_1}\sum_{j_1}\sum_{\alpha,\beta}\sum_j \sum_{m=0}^\infty\frac{\lambda^2}{2}C^{(j_1)}_{\alpha_1}(m_1,n_1)\nonumber\\
& (\mathcal{C}^{j_1})_{\alpha_1}^{\beta_1}D^{(j)}_\alpha(m,n)(\mathcal{C}^{j})_\alpha^\beta
(\eta^{\alpha\gamma}\delta_{m_1,m}\delta_{\gamma,\alpha_1}\partial_{n-m-j-1,\beta}\partial_{m_1+n_1-j_1,\beta_1}+\delta_{n-m-j-1,,m_1}\delta_{\alpha_1,\beta}\partial^\alpha_m \partial_{m_1+n_1-j_1,\beta_1})\nonumber\\
&+\sum_{m_1=0}^\infty \sum_{\alpha_1,\beta_1}\sum_{j_1}
\sum_{\alpha,\beta}\frac{1}{2\lambda^2}C^{(j_1)}_{\alpha_1}(m_1,n_1)(\mathcal{C}^{j_1})_{\alpha_1}^{\beta_1}(\mathcal{C}^{n+1})_\alpha^\beta (t^{\alpha_1}_{m_1}(\delta_{\beta_1,\alpha}t_{m_1+n_1-j_1,\beta}+\eta_{\beta,\gamma}t^{\alpha}_{m_1+n_1-j_1}\delta_{\beta_1,\gamma}))\nonumber
\end{align}
\begin{align}
&+ \sum_{m_1=0}^\infty \sum_{\alpha_1,\beta_1}\sum_{j_1}\sum_{\alpha,\beta}\sum_j \sum_{m=0}^\infty\frac{\lambda^2}{2}C^{(j)}_{\alpha}(m,n)\nonumber\\
& (\mathcal{C}^{j})_{\alpha}^{\beta}D^{(j_1)}_{\alpha_1}(m_1,n_1)(\mathcal{C}^{j_1})_{\alpha_1}^{\beta_1}
(\eta^{\alpha\gamma}\delta_{m,m_1}\delta_{\gamma,\alpha_1}\partial_{n_1-m_1-j_1-1,\beta}\partial_{m+n-j,\beta_1}+\delta_{n_1-m_1-j_1-1,,m}\delta_{\alpha_1,\beta}\partial^\alpha_{m_1} \partial_{m+n-j,\beta_1})\nonumber\\
&+\sum_{\alpha_1,\beta_1}\sum_{j_1} \sum_{m_1=0}
\sum_{\alpha,\beta}\frac{1}{4} D^{(j_1)}_{\alpha_1}(m_1,n_1)(\mathcal{C}^{j_1})_{\alpha_1}^{\beta_1}(\mathcal{C}^{n+1})_\alpha^\beta
(\delta_{\alpha,\beta}\delta_{m_1,n_1-m_1-j_1-1}+\eta^{\alpha\alpha}\eta_{\beta\beta}\delta_{m_1,n_1-m_1-j_1-1}\nonumber\\
&+\delta_{\alpha,\beta}t_{n_1-m_1-j_1-1}\partial_{m_1}^\alpha+t_{n_1-m_1-j_1-1}^\alpha \eta_{\beta\beta}\partial_{m_1}^\alpha+\eta^{\alpha\alpha}t_{m_1,\beta}\partial_{n_1-m_1-j_1-1,\beta}+\delta_{\alpha,\beta}t_{m_1}^\beta \partial_{n_1-m_1-j_1-1,\beta})\nonumber\\
&-\sum_{m=0}^\infty \sum_{\alpha_1,\beta_1}\sum_{j}
\sum_{\alpha,\beta}\frac{1}{2\lambda^2}C^{(j)}_{\alpha}(m,n)(\mathcal{C}^{j})_{\alpha}^{\beta}(\mathcal{C}^{n_1+1})_{\alpha_1}^{\beta_1} (t^{\alpha}_{m}(\delta_{\beta,\alpha}t_{m+n-j,\beta}+\eta_{\beta_1,\gamma}t^{\alpha_1}_{m+n-j}\delta_{\beta,\gamma}))\nonumber
\end{align}

\begin{align}
&-\sum_{\alpha_1,\beta_1}\sum_{j} \sum_{m=0}
\sum_{\alpha,\beta}\frac{1}{4}D^{(j)}_\alpha(m,n)(\mathcal{C}^j)_{\alpha}^{\beta}(\mathcal{C}^{n_1+1})_{\alpha_1}^{\beta_1}
(\delta_{\alpha_1,\beta_1}\delta_{m,n-m-j-1}+\eta^{\alpha_1\alpha_1}\eta_{\beta_1\beta_1}\delta_{m,n-m-j-1}\nonumber\\
&+\delta_{\alpha_1,\beta_1}t_{n-m-j-1}\partial_{m}^{\alpha_1}+t_{n-m-j-1}^{\alpha_1} \eta_{\beta_1\beta_1}\partial_{m}^{\alpha_1}+\eta^{\alpha_1\alpha_1}t_{m,\beta_1}\partial_{n-m-j-1,\beta_1}+\delta_{\alpha_1,\beta_1}t_{m}^{\beta_1} \partial_{n-m-j-1,\beta_1})\nonumber
\end{align}
\begin{align}
&=(n_1-n)(\sum_{m_1=0}^\infty \sum_{\alpha_1,\beta_1}\sum_{j_1} C^{(j_1)}_{\alpha_1}(m_1,n_1+n)(\mathcal{C}^{j_1})_{\alpha_1}^{\beta_1} t^{\alpha_1}_{m_1} \partial_{m_1+n_1+n-j_1,\beta_1}\nonumber\\
& +\frac{\lambda^2}{2}\sum_{\alpha_1,\beta_1}\sum_{j_1} \sum_{m_1=0} D^{(j_1)}_{\alpha_1}(m_1,n+n_1)(\mathcal{C}^{j_1})_{\alpha_1}^{\beta_1} \partial_{m_1}^\alpha \partial_{n_1+n-m_1-j_1-1,\beta}\nonumber\\
&+\frac{1}{2\lambda^2}\sum_{\alpha_1,\beta_1}(\mathcal{C}^{n_1+n+1})_{\alpha_1}^{\beta_1} t^{\alpha_1} t_{\beta_1})\nonumber\\
& =(n-n_1)L_{n+n_1}
\end{align}
Similar to \cite{E}, we have used identities:
\begin{align}
&\sum_{j_1=0}^j(\mathcal{C}^{j-j_1})_{\alpha_1}^{\beta_1}C^{(j-j_1)}_{\alpha_1}(m_1,n_1)C^{(j_1)}_{\beta_1}(m_1+n_1-j_1,n)\nonumber\\
&=(\mathcal{C}^{j-j_1})_{\alpha_1}^{\beta_1}((b_{\alpha_1}+m_1+n_1)C^{(j)}_{\alpha_1}(m_1,n_1+n)\nonumber\\
&+(m_1+n+n_1-j-j_1+1)C^{(j-1)}_{\alpha_1}(m_1,n_1+n));
\end{align}
and
\begin{align}
&\sum_{j_1=0}^j (\mathcal{C}^{j-j_1})_{\alpha}^{\alpha_1}D^{(j-j_1)}_\alpha(m,n)C^{(j_1)}_{\alpha_1}(n-m-j+j_1,n_1)\nonumber\\
&=(\mathcal{C}^{j-j_1})_{\alpha}^{\alpha_1}((b_\alpha+n-m-1)D^{(j)}_\alpha(k,n+n_1)+(n+n_1-m-j)D^{(j-1)}_\alpha(m,n+n_1)).
\end{align}

\subsection{the planar graph interpretation of Virasoro constraints}
The constraints $L_{-1}Z=0$ and $L_0 Z$ have been obtained in \cite{DW} and \cite{H}. We shall provide a planar graphic interpretation for $L_n Z=0$, $n\geq 1$. This interpretation is not rigorous so far.

    Two dimensional quantum gravity have been proved to be equivalent to hermitian matrix theories.
It is well known ,for a very general one matrix model is in fact a planar graph theory \cite{t'Hooft}\cite{B}\cite{Fran},
\begin{align}
Z_N(t_1,t_2,\cdots)&=\langle e^{N\sum_{i\geq 1}\operatorname{Tr}(M^i/i)}\rangle \nonumber\\
                   &=\sum_{n_1,n_2,\cdots\geq 0}\prod_{i\geq 1}\frac{(Nt_i)^{n_i}}{i^{n_i}n_i!}\langle \prod_{i\geq 1}\operatorname{Tr}(M^i)^{n_i}\rangle\nonumber\\
                   &=\sum_{n_1,n_2,\cdots\geq 0}\prod_{i\geq 1}\frac{(Nt_i)^{n_i}}{i^{n_i}n_i!} \nonumber\\
                   &\sum_{\text{ all labeled fat graphs $\Gamma$ with $n_i$ $i$-valent vertices }}
                    N^{-E(\Gamma)}N^{F(\Gamma)}\nonumber\\
                   &=\sum_{\text{fat graphs $\Gamma$}} \frac{N^{V(\Gamma)-E(\Gamma)+F(\Gamma)}}{\operatorname{Aut}(\Gamma)}\prod_{i\geq 1}g_i^{n_i(\Gamma)},
\end{align}
where $n_i(\Gamma)$ denotes the total number of $i$-valent vertices of $\Gamma$ and $V(\Gamma)=\sum_i n_i(\Gamma)$ is the total number of vertices of $\Gamma$.
We don't know if there is a matrix theory that is equivalent to the theory (\ref{partition}). However, we would like to propose

\textbf{Conjecture} \emph{The theory (\ref{partition}) is equivalent to a planar graph theory}.

This conjecture is of course weaker than the statement "the theory (\ref{partition}) is equivalent to a matrix theory." We expect that  this conjecture can be generalized to some other conformal theories.

In the rest of this subsection, we assume that this conjecture holds, then we shall show how the Virasoro constraints arise:
To show that the partition function (\ref{partition}) is annihilated by $L_n$, we will show that $L_n$ is the generator of particle antiparticle symmetry of the theory.

There are three terms in the right hand side of (\ref{48}), we claim that the first term
corresponds the annihilation of a vertex and creation of a vertex, such that the weight on the target
increases $n$. There are two factors to realize this: once a particle with target weight $m+n-j$ is annihilated and
a particle with target weight $m$ is created, from the graph, it is in fact a $m+n-j$ valent vertex   becomes a $m$ valent vertex. This makes $n-j$ to the contribution of the $n$ extra target weight. The rest $j$ target weight comes from directly from the factor $(\mathcal{C}^j)_\alpha^\beta$. The factor $D^{j}(m,n)$ is the ratio of the process from the graph:when $2(n-j)$ gluons or edges are decoupled from a $2m+2n-2j$ target weight vertex, to form the new graph, we need to divide it by $\frac{1}{(2m+2)(2m+4)\cdots (2m+2n-2j)}$. Mean while, when the new vertex is created, those gluons get new freedoms to couple(it is different from decoupling. When decoupling happens, there is no restriction) to the new vertex and the extra weights, we need a factor which is $(2q_\alpha+2m-(N-1))(2q_\alpha+2m+2-(N-1))\cdots (2q_\alpha+2m+2n-(N-1))(\sum_{m\leq l_1< l_2\cdots \l_j\leq m+n}\prod (\frac{1}{2q_\alpha+2l_j}-(N-1))$(there is $N-1$ but not $2(N-1)$ in these parenthesis because the maps in the moduli spaces are holomorphic). That is the explanation of the first term.

For the second term, the process is that two vertices form one more genus. This annihilation increases the target weight $(m+n-j-1+1)$. The last 1 in the parenthesis is because two vertices are annihilated.The rest of the factors have the similar explanation except the factor $\lambda^2$. In fact, this is because when the two vertices are annihilated,the Euler number increases by 2. The denominator 2 under $\lambda^2$ is because of the symmetry of the vertices.

The third term's explanation can be realized  similarly to the second term, by splitting a genus into two vertices. But the target weight increases totally from the $\mathcal{C}^j$.

Since $Z$ is invariant under conformal transformation, and according to our analysis above, $L_n$,  $n\geq 1$ is the representation of generator of this conformal transformation. Therefore $Z$ is annihilated by $L_n$, i.e., $L_n Z=0$.

\begin{remark}
this explanation of the Virasoro constraints is based on the existence of the corresponding planar graph theory. More knowledge about this conjectural planar graph theory is appreciated for deeper understanding of this explanation.
\end{remark}

\subsection{Virasoro constraints and conformal invariance}

Now let us turn to physics. To prove the partition function of a conformal field theory is annihilated by
the family $\{L_n\}$,  $n\geq -1$, one may wants to prove that function is invariant under the action of conformal transformation, which is a local coordinate transformation by a holomorphic function. Let us consider a sigma model from a Riemann surface to the smooth projective manifold $M$.
The Lagrangian  is
\begin{align}
\mathcal{L}&=\mathcal{L}_0-( (\int_\Sigma \sum_{m=0}^\infty \sum_\alpha t^\alpha_m\tau_m(\alpha)
 +\sum_{i=1}^r (\ln q_i) X^*(q_i))+(\ln \lambda )\chi(\Sigma)),
\end{align}
where $\sigma$ is the coordinate of the Riemann surface and  $g^{ab}$ is the metric on the Riemann surface;$\xi_m$s are fixed points on the Riemann surface;
 $X^\mu$ is the coordinate of the target space and $G_{\mu\nu}$ is the metric on the target space;
   $q_i$ in the front of $X(q_i)$ denotes  an indeterminate which corresponds a basis element of $H_2(M,\mathbb{Q})$ and  by abuse of notation, in $X(q_i)$, $q_i$ also represents the dual basis element in $H^2(M,\mathbb{Q})$;
 $\{\mathcal{O}_i\}$ form a basis of $H^*(M,\mathbb{Q})$.

Classically, this action is invariant under conformal transformation. If the genus $g$ is fixed, the path integral of the above classical action is the free energy for stable curve with genus $g$, so let us consider the partition function in which the genus $g$ is summed up:
we consider the following partition function
\begin{align}
Z(t)^M= \sum_{\Sigma\in S}\frac{1}{\operatorname{Aut}(\Sigma)}\int \frac{[dX dg]}{\text{Diff}\times\text{Weyl}}\exp(-\mathcal{L}). \label{partition2}
\end{align}
Here $S$ is the collection of finite stable curves.This partition function is exactly the partition defined in (\ref{partition}), by the definition of it.
The partition function in (\ref{partition}) is invariant under the conformal transformation. Therefore the partition function in (\ref{partition2}) is also invariant under conformal transformation.
conformal transformation can also be realized by local coordinate transformation
$$z'=z+\epsilon v(z)=z+\sum_{n=-1}^\infty \epsilon_n z^{n+1}
$$.
Since the conformal transformation is local, we would like to conjecture:

\emph{The representation of the generators of conformal transformation with central extension is} (\ref{48}).

 Locally a point in the moduli space can be represented as $(z,g_{ij}(z), X^\mu(z))$ (here we only write out the coordinate of one punched point) , which is an infinite dimensional space. We could define the hermitian metric on this infinite dimensional space, as the following
\begin{align}
\langle dz^a,dz^b\rangle&= g^{ab},\nonumber\\
\langle dz^c, dg^{ab}\rangle &=g^{cd}\partial_d g^{ab},\nonumber\\
\langle dg^{ab},dg^{cd}\rangle &=g^{ac}g^{bd},\nonumber\\
\langle  dz^a, dX^\mu\rangle &= g^{ab} G_{\mu\nu}\partial_b X^{\nu},\nonumber\\
\langle dX^\mu, dX^\nu \rangle &=G^{\mu\nu}.
\end{align}
where all the indexes run over complex coordinate index and their conjugates. Then it might be possible, but very complicated, to explicitly compute the first chern class of the vector bundle $\mathcal{L}_i$ and check the Virasoro condition.


\begin{thebibliography}{}
\bibitem{B}Bessis, D.; Itzykson, C.; Zuber, J. B.,Quantum field theory techniques in graphical enumeration.
Adv. in Appl. Math. 1 (1980), no. 2, 109--157. 
\bibitem{D} R. Dijkgraaf, H. Verlinde and E. Verlinde, Loop equations and Virasoro constraints
in nonperturbative two-dimensional quantum gravity, Nuclear Phys. B 348 (1991)
No. 3, 435-456.
\bibitem{DW}R. Dijkgraaf and E. Witten, Nucl. Phys. B342 (1990) 486.
\bibitem{Deligne}P.Deligne and D.Mumford, The irreduciblity of the space of curves of given genus, Inst.Hautes \'{E}tudes Sci. Publ.Math.45(1969)75.
\bibitem{E}Eguchi, Tohru; Hori, Kentaro; Xiong,Chuan-Sheng;,Quantum cohomology and Virasoro algebra,Phys. Lett. B 402 (1997), no. 1-2, 71--80.

\bibitem{F}M. Fukuma, H. Kawai and R. Nakayama, Continuum Schwinger-Dyson equations
and universal structures in two-dimensional quantum gravity, Internat. J. Modern
Phys. A6 (1991) No. 8, 1385-1406.

\bibitem{Fran}P.Di Francesco, 2D Quantum Gravity, Matrix Models and Graph Combinatorics, Applications of Random
Matrices in Physics.
\bibitem{Liu1}Liu, Xiaobo; Tian, Gang,Virasoro constraints for quantum cohomology.J. Differential Geom. 50 (1998), no. 3, 537--590.
\bibitem{FP}Hodge integrals and Gromov-Witten theory. Invent. Math. 139 (2000), no. 1, 173--199.
\bibitem{Liu2} Liu, Xiaobo, Elliptic Gromov-Witten invariants and Virasoro conjecture. Comm. Math. Phys. 216 (2001), no. 3, 705--728.
\bibitem{Liu3}Liu, Xiaobo, Genus-2 Gromov-Witten invariants for manifolds with semisimple quantum cohomology. Amer. J. Math. 129 (2007), no. 2, 463--498.
\bibitem{Lee}Lee, Y.-P.,Witten's conjecture and the Virasoro conjecture for genus up to two. Gromov-Witten theory of spin curves and orbifolds, 31--42, Contemp. Math., 403, Amer. Math. Soc., Providence, RI, 2006.
\bibitem{H}K. Hori, Nucl. Phys. B439 (1995) 395.
\bibitem{Harish}Harish-Chandra, Differential operators on a semisimple Lie algebra. Am. J. Math.
79, 87-120(1957).
\bibitem{Kac}V.G.Kac and A.K.Raina, Highest weight representation of infinite dimensional Lie algebras, Advanced Series in Mathematical PHysics Vol.2,World Scientific.
\bibitem{Kac1}V. Kac and A. Schwarz, geometric interpretation of the partition function of 2D
gravity, Phys. Lett. B 257 (1991) No. 3-4, 329-334.
\bibitem{K}Kontsevich,  intersection theory on the moduli space of curves and the Matrix Airy Function, Commun.Math.Phys. 147,1-23(1992).
\bibitem{I}Itzykson,Zuber,  Combinatorics of the modular group.II.the Kontsevich integrals, International Journal of modern physics A,vol 7,no.(23),5661-5705.
\bibitem{W1}E.Witten,Two dimensional gravity and intersection theory on moduli space, Surveys in Diff.Geom.1,243-310(1991).
\bibitem{P}Polchinski, String Theory Vol I,II, Cambridge University Press, 1998.
\bibitem{He}Singurdur Helgason, Groups and Geometric Analysis,Mathematical Surveys and Monographs Volume 83.
\bibitem{Mirror}Kentaro Hori, Sheldon Katz, Albrecht Klemm, and Rahul Pandharipande,Mirror Symmetry (Clay Mathematics Monographs, V. 1),American Mathematical Society; illustrated edition edition (July 2003).
\bibitem{M}L.M.Mehta, Random Matrices, third Edition.
\bibitem{t'Hooft}G.'t Hooft, Nuclear Phys. B72(1974)461.
 \bibitem{T}Vladimir Turaev,\emph{Introduction to Combinatorial Torsions}, Birkhauser (April 2001).
\bibitem{Tu}V.G.Turaev, \emph{Reidemeister torsion in knot theory} , Russian Math. Surveys 41,119-182,1986.
 \bibitem{J}Palle E.T.Jorgensen, \emph{Operators and Representation theory:Canonical Models for Algebras of Operators Arising Quantum Mechanics},
\bibitem{W1}Edward Witten,Two-Dimensional Gravity and Intersection Theory on Moduli Space, Surveys in Differential Geometry,1(1991)243-310.
\bibitem{W2}Edward Witten, Topological Sigma Model,Surveys in Diffrential Geometry 1(1991)243-310.
\bibitem{Z}D.P.\v{Z}elobenko, \emph{Compact Lie groups and their Representations}, Translations of Mathematical Monographs Volume 40.
\bibitem{GT}D.J.Gross and W.Taylor IV, Two dimensional QCD is a string theory, Nucl. Phys.B400(1993) 181-210,hep-th/9301068
\bibitem{Z}D.P.\v{Z}elobenko, \emph{Compact Lie groups and their Representations}, Translations of Mathematical Monographs Volume 40.
\end{thebibliography}
\end{document}